\newcommand{\nosemic}{\renewcommand{\@endalgocfline}{\relax}}
\newcommand{\dosemic}{\renewcommand{\@endalgocfline}{\algocf@endline}}
\let\oldnl\nl
\newcommand{\nonl}{\renewcommand{\nl}{\let\nl\oldnl}}
\DeclarePairedDelimiterX\MeijerM[3]{\lparen\!}{\rparen}%
{\,#3\delimsize\vert\begin{smallmatrix}#1 \\ #2\end{smallmatrix}}
\newcommand\MeijerG[8][]{%
  G^{\,#2,#3}_{#4,#5}\MeijerM[#1]{#6}{#7}{#8}}
\newcommand\MeijerG*[7]{%
  G^{\,#1,#2}_{#3,#4}\MeijerM*{#5}{#6}{#7}}
\newtheorem{lemma}{Lemma}
\newtheorem{proposition}{Proposition}
\newcommand{\RNum}[1]{\uppercase\expandafter{\romannumeral #1\relax}}
\newlength{\dhatheight}
\newlength{\dtildeheight}
\newcommand{\doublehat}[1]{%
    \settoheight{\dhatheight}{\ensuremath{\hat{#1}}}%
    \addtolength{\dhatheight}{-0.2ex}%
    \hat{\vphantom{\rule{1pt}{\dhatheight}}%
    \smash{\hat{#1}}}}
\newcommand{\mathbbm}[1]{\text{\usefont{U}{bbm}{m}{n}#1}}
\begin{document}

\title{Semantic Filtering and Source Coding in Distributed Wireless Monitoring Systems}

\author{Pouya Agheli, 
\IEEEmembership{Student Member, IEEE}, Nikolaos Pappas, 
\IEEEmembership{Senior Member, IEEE}, and Marios Kountouris, \IEEEmembership{Fellow, IEEE.}
\thanks{P. Agheli and M. Kountouris are with the Communication Systems Department, EURECOM, Sophia-Antipolis, France, email: \{\texttt{pouya.agheli, marios.kountouris\}@eurecom.fr}. N. Pappas is with the Department of Computer and Information Science, Linköping University, Sweden, email: \texttt{nikolaos.pappas@liu.se}. The work of P. Agheli and M. Kountouris has received funding from the European Research Council (ERC) under the European Union’s Horizon 2020 research and innovation programme (Grant agreement No. 101003431). The work of N. Pappas is supported by the VR, ELLIIT, and the European Union (ETHER, 101096526). Part of this work is presented in \cite{agheli2022semantics,agheli2022semanticsGC}.}}

\maketitle

\begin{abstract}
The problem of goal-oriented semantic filtering and timely source coding in multiuser communication systems is considered here. We study a distributed monitoring system in which multiple information sources, each observing a physical process, provide status update packets to multiple monitors having heterogeneous goals. Two semantic filtering schemes are first proposed as a means to admit or drop arrival packets based on their goal-dependent importance, which is a function of the intrinsic and extrinsic attributes of information and the probability of occurrence of each realization. Admitted packets at each sensor are then encoded and transmitted over block-fading wireless channels so that served monitors can timely fulfill their goals. A truncated error control scheme is derived, which allows transmitters to drop or retransmit undelivered packets based on their significance. Then, we formulate the timely source encoding optimization problem and analytically derive the optimal codeword lengths assigned to the admitted packets which maximize a weighted sum of semantic utility functions for all pairs of communicating sensors and monitors. Our analytical and numerical results provide the optimal design parameters for different arrival rates and highlight the improvement in timely status update delivery using the proposed semantic filtering, source coding, and error control schemes.
\end{abstract}
\begin{IEEEkeywords}
Goal-oriented semantic communication, semantic filtering, timely source coding, distributed monitoring systems. 
\end{IEEEkeywords}

\IEEEpeerreviewmaketitle

\section{Introduction}
\lettrine{G}{oal-oriented} semantic communication has recently attracted considerable attention and constitutes an information handling paradigm that has the potential to render various network processes more efficient and effective through a parsimonious usage of communication and computation resources. 
The design and the evolution of communication systems to date have been mainly driven by a maximalist approach, which sets audacious yet often hard to achieve goals and comes with inflated requirements in terms of resources, network over-provisioning, and ineffective scalability. In sharp contrast, goal-oriented semantics communication could be seen as a minimalist design approach (``less is more"), advocating a paradigm shift from extreme performance to sustainable performance, where the effective performance is maximized while significantly improving network resource usage, energy consumption, and computational efficiency. This vision has a long history dating back to Weaver's introduction of the Shannon model of communication \cite{ShannonWeaver49}. Various attempts, from different angles and using diverse tools, have been made in the past towards a semantic theory of communication \cite{Sem1,Sem2,Sem3,Sem4,Sem5,Sem6,Sem7,Sem9}; the vast majority of these endeavors remained at a conceptual level and did not lead to an elegant and/or insightful theory with immediate practical applications. Nonetheless, the quest for such theory has recently gained new impetus \cite{Popovski2020semantic,kountouris2021semantics,Qin22arxiv}, fueled by the emergence of connected intelligence systems, real-time cyber-physical systems, and interactive, autonomous multi-agent systems. 

An indispensable element to unlock the potential of goal-oriented communication is a concise, operational, and universal definition of the \emph{semantics of information} (SoI), i.e., a set of measures for the significance and usefulness of messages with respect to the goal of data exchange. Going beyond surrogate metrics, such as age of information (AoI) \cite{NowAoI,yates2021age, pappas2023age}, age of incorrect information (AoII) \cite{maatouk2022age}, quality of information (QoI) \cite{QoI}, and value of information (VoI) \cite{VoI_USSR,VoI, kosta2020cost, ayan2019age}, SoI can be leveraged so that the communication process, together with key associated functionalities (e.g., sensing, learning, processing), are adapted to the end-user goals/requirements. Empowered by networked intelligence, in semantic communication, only valuable and relevant content with respect to a goal is acquired, transported, and reconstructed, leading to a drastic reduction in the number of unnecessary bits processed and sent.
Otherwise stated, following the mantra that not all bits are equal, semantic communication may boost the ``information efficiency" of future communication systems, meaning that it could maximize the number of bits of useful information extracted and delivered per resource consumed. This new communication paradigm has the potential to redefine importance, timing, and effectiveness in future networked intelligent systems.

In this paper, we study a distributed monitoring system (DMS), in which multiple remote monitors receive status updates from multiple smart devices (e.g., sensors), each observing an information source. The updates generated by an information source may correspond to observations or measurements of a random physical phenomenon (event) and are taken from a known discrete distribution with finite support. Each status update is assigned a value reflecting its importance based on its intrinsic features, such as probability of occurrence, and on its extrinsic attributes, related to the sensor (source) at which it is generated. 
Semantic filtering is first performed at the transmitter, as a means to admit or drop only the most relevant or useful packets according to the associated monitor's application-dependent goal.
Admitted status updates are then encoded and sent to connected monitors over orthogonal block-fading channels. Different error control protocols are employed to harness packet transmission failures due to fading.
The objective of this paper is to design a semantic source coding scheme for a multiuser system with heterogeneous goals, considering the probability of occurrence of a realization at a sensor side, the probability that a monitor successfully receives an update from its connected sensor, and the rationale for which update packets are sent to the destination. 
Specifically, we consider that only a fraction of the source realizations is important for the different monitors. A set of realizations becomes more significant or relevant than the others for a certain application, where its elements could potentially vary for different goals or over different periods. A simple instance of this model is a scenario where one decision maker is interested in regular/standard information for monitoring purposes or typical actuation (normal mode), whereas the other monitor tracks the outliers that could potentially represent some kind of threat to the system or a possibly dangerous situation (alarm mode). In that case, only ``most" (``least") frequent source realizations are important for the first (second) monitor, treating the remaining ones as not informative or irrelevant. The SoI is captured here via a metric of timeliness for the received updates at the monitor(s), which in turn is a nonlinear function of AoI \cite{kosta2020cost}. 

\subsection{Related Work}
This work falls within the realm of timely source coding problem \cite{TSC1,TSC2,TSC3,bastopcu2020optimal,buyukates2020optimal,bastopcu2021selective,agheli2022semantics,agheli2022semanticsGC}. These works study the design of lossless source codes and block codes that minimize the average AoI in status update systems under different queuing theoretic considerations. References \cite{bastopcu2020optimal,bastopcu2021selective} consider a selective encoding mechanism at the transmitter for timely updates. The real-valued optimal codeword lengths that minimize the average age at the receiver are derived therein, whereas in \cite{buyukates2020optimal} an empty symbol is used to reset (or not) the age. In \cite{TSC1}, the authors consider a zero-wait update policy and find optimal source codes that achieve the minimum average age up to a constant gap, using Shannon codes based on a tilted version of the original symbol generating probability mass function (pmf). Semantic source coding is studied in \cite{agheli2022semantics}, where real-valued optimal codeword lengths that maximize a semantics-aware utility function and minimize a quadratic average length cost \cite{baer2006source} are determined. The semantic encoding problem is extended to a two-user system with heterogeneous, possibly conflicting or diverging, goals in \cite{agheli2022semanticsGC}. This paper extends our prior work into distributed monitoring systems (DMS) with multiple sensors at the transmitter side and multiple monitors at the destination, which have heterogeneous and dissimilar goals. We also delve into evaluating the importance of status update packets and propose semantic value assessment schemes for those packets depending on several features rather than assigning importance only based on the probability of occurrences, as studied in \cite{agheli2022semantics} and \cite{agheli2022semanticsGC}. Furthermore, an adaptive semantic filtering method and importance-aware packet error control protocols are employed to improve the performance of the system compared to fixed filtering, as in \cite{agheli2022semantics} and \cite{agheli2022semanticsGC}, and the importance-agnostic error control protocol used in \cite{agheli2022semanticsGC}. We then determine codeword lengths that maximize a weighted sum of semantics-aware utility functions for all pairs of communicating sensors and monitors, highlighting the performance gains of semantic filtering and source coding for a general model.


\subsection{Contributions}
The main contributions of this work can be summarized as follows. 
\begin{itemize}
    \item We perform semantic value assessment for the status update packets in two levels. At the source level, the importance of an arrival takes on the form of a \emph{meta-value} and captures flexibly the interdependence between different intrinsic and extrinsic attributes. At the link level, SoI is measured using a nonlinear function of AoI.
    \item We propose two semantic filtering mechanisms to control the flow of arrival packets, drop unimportant arrivals, and reserve the channel for informative packets. In the first (fixed-length) scheme, a set of unimportant arrivals is filtered using selective encoding. In the second (adaptive-length) scheme, a smaller fraction of the least important packets (considering the occurrence probability) are immediately dropped upon arrival, while the remaining ones may pass the filter depending on their relative importance for a predefined goal and its evolution. We show that the latter method performs better for any range of arrival rates at the expense of a small increase in the channel load.
    \item For packet error control, we employ truncated forms of simple and hybrid automatic repeat request (ARQ and HARQ) protocols. In contrast to conventional importance-agnostic protocols with a fixed number of retransmissions, we propose a scheme in which the maximum number of a packet's retransmissions is adapted to its importance with respect to the specific goal for which it is generated. Evidently, a packet with higher importance is assigned with more retransmission rounds in the event of multiple failures.
    \item We determine the real-valued optimal codeword length of each admitted arrival packet, based on its probability of occurrence, the observation probability related to the sensor tracking the realization, and the served monitor, as well as its meta-value. For that, we cast an optimization problem that maximizes a weighted sum of semantics-aware utility functions for all pairs of sensors and monitors, and obtain the solution both analytically and using a proposed simple algorithm.
\end{itemize}


\textit{Notations:} $\mathbb{R}$, $\mathbb{R}^+$, $\mathbb{R}_0^+$, and $\mathbb{Z}^+$ denote the set of real, positive real, non-negative real, and positive integer numbers, respectively. $\mathbb{E}[\cdot]$ is the expectation operator, $\mathcal{W}_0(.)$ denotes the principal branch of Lambert $\mathcal{W}$ function, and $\mathcal{O}(\cdot)$ denotes the growth rate of a function.

\section{System Model}\label{Section2}
We consider a DMS, in which a set $\mathcal{K}$ of smart sensor modules (SSMs), with cardinality $|\mathcal{K}|=K$, provides timely status updates to a set $\mathcal{M}$ of $M$ monitor modules (MMs) (see Fig.~\ref{fig:fig1}).
An SSM consists of a sensor, a semantic filtering module, a source encoder, and a module for information transmission (PHY - physical layer) operations, such as modulation and channel coding. An MM could be a \emph{display device} or an \emph{actuator}, which serves a specific goal and performs tasks based on status updates and commands received from the SSMs. 

We consider a model where the $k$-th SSM, $k = 1, 2, \ldots, K$, is connected to a subset of MMs, denoted by $\mathcal{M}(k)$, based on a fixed topology. Therefore, the $m$-th MM, where $m = 1, 2, \ldots, M$, receives timely status updates only from its serving SSMs, denoted as $\mathcal{K}(m) = \{k: m \in \mathcal{M}(k)\}$. An SSM tracks a physical phenomenon (event) with finite realizations and generates independent and identically distributed (i.i.d.) status update packets. The realization set for the $k$-th SSM is defined as $\mathcal{X}^{(k)} = \{x_i^{(k)} ~ |~ i\!\in\!\mathcal{I}_{n_k}\}$, $\mathcal{I}_{n_k} = \{1,2,...,n_k\}$, where each element has a probability of occurrence $\bar{p}_i^{(k)} = P_X(x_i^{(k)})$ with $P_X(\cdot)$ being a known pmf. Also, $n_k$ denotes the maximum number of realizations the $k$-th SSM can observe. Depending on the task(s) at the MM side or the goal specified by the application or the end-user, a realization $x_i^{(k)}$ at the $k$-th SSM connected to the $m$-th MM is attributed a certain importance (value) $v_i^{(k,m)}$ at its source level. The feature-based value assessment of each packet at an SSM is discussed in Section~\ref{Section3}. 
In this work, we consider a continuous-time system where packet generation follows a Poisson process with input rate $\lambda_k$ for the $k$-th SSM. Thus, the input arrival rate at one layer of an MM connected to a set of identical sensors is equal to the sum of the input rates from those SSMs.

The probability that the $m$-th MM observes an update packet from the $k$-th serving SSM among all the sensors it is connected to, i.e., $\mathcal{K}(m)$, is denoted by $p_k^{(m)}$, where $\sum_{k \in \mathcal{K}(m)} p_k^{(m)} = 1, \forall m$. Since the observation probability of an SSM from each of its connected MMs can be different, an arbitrary realization at that SSM is supposed to occur with different probabilities for those served MMs. In this regard, information processing transmission is performed in multiple layers and different communication channels, respectively. For instance, semantic filtering, source encoding, and PHY operations at the $k$-th SSM are performed in $|\mathcal{M}(k)|$ parallel layers, each of which is reserved for packets transmitted to one served MM. From the monitoring perspective, the sensors with similar physical features, probabilities of realizations $p_i^{(k,m)}$, and observation probabilities $p_k^{(m)}$ are considered \emph{identical}\footnote{In practice, sensors with identical characteristics can monitor the same phenomenon or process from different angles, locations, or time instants. In our analysis, a set of identical sensors can be treated as a unique sensor with shared properties.}, while the rest are called \emph{dissimilar} smart sensors. Like the SSMs, multiple layers are utilized at each MM to process packets arriving from its dissimilar sensors, whereas packets from identical sensors are processed in the same layers. A functional diagram for the $k$-th SSM and the $m$-th MM, $\forall m \in \mathcal{M}(k)$, is depicted in Figure~\ref{fig:fig2}. \begin{figure}[t!]
    \begin{minipage}{.42\textwidth}
    \centering
    \pstool[scale=0.38]{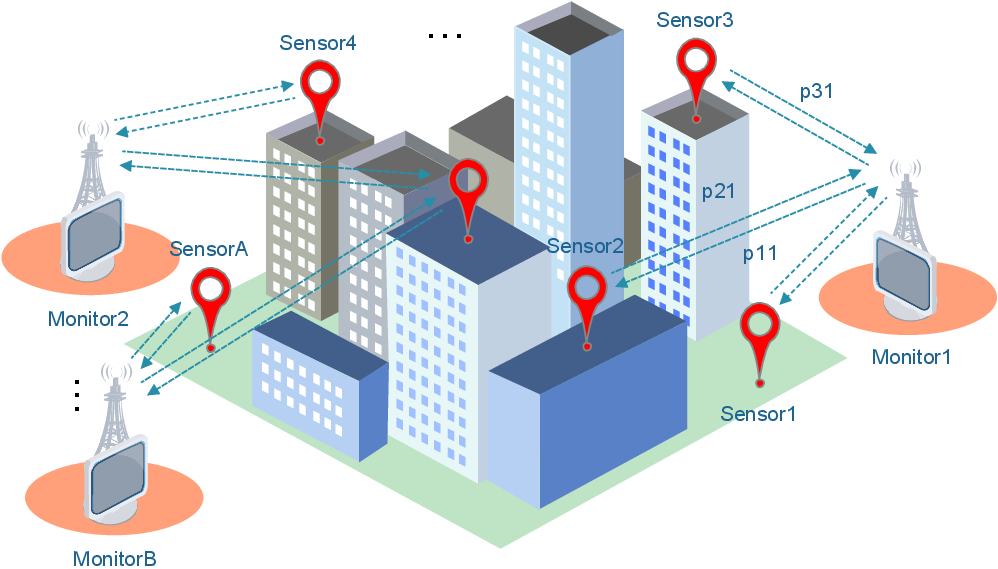}{
    \psfrag{Monitor1}{\hspace{-0.00cm}\scriptsize $\text{MM}_1$}
    \psfrag{Monitor2}{\hspace{0.0cm}\scriptsize $\text{MM}_2$}
    \psfrag{MonitorB}{\hspace{-0.00cm}\scriptsize $\text{MM}_m$}
    \psfrag{Sensor1}{\hspace{-0.04cm}\scriptsize $\text{SSM}_1$}
    \psfrag{Sensor2}{\hspace{-0.04cm}\scriptsize $\text{SSM}_2$}
    \psfrag{Sensor3}{\hspace{-0.04cm}\scriptsize $\text{SSM}_3$}
    \psfrag{Sensor4}{\hspace{-0.04cm}\scriptsize $\text{SSM}_4$}
    \psfrag{SensorA}{\hspace{-0.04cm}\scriptsize $\text{SSM}_k$}
    \psfrag{p11}{\hspace{0.08cm}\scriptsize $p_1^{(1)}$}
    \psfrag{p21}{\hspace{0.08cm}\scriptsize $p_2^{(1)}$}
    \psfrag{p31}{\hspace{0.08cm}\scriptsize $p_3^{(1)}$}
    }
    \caption{A goal-oriented, semantics-empowered\\ distributed monitoring system.}
    \label{fig:fig1}
    \end{minipage}
    \begin{minipage}{.58\textwidth}
    \centering
    \pstool[scale=0.37]{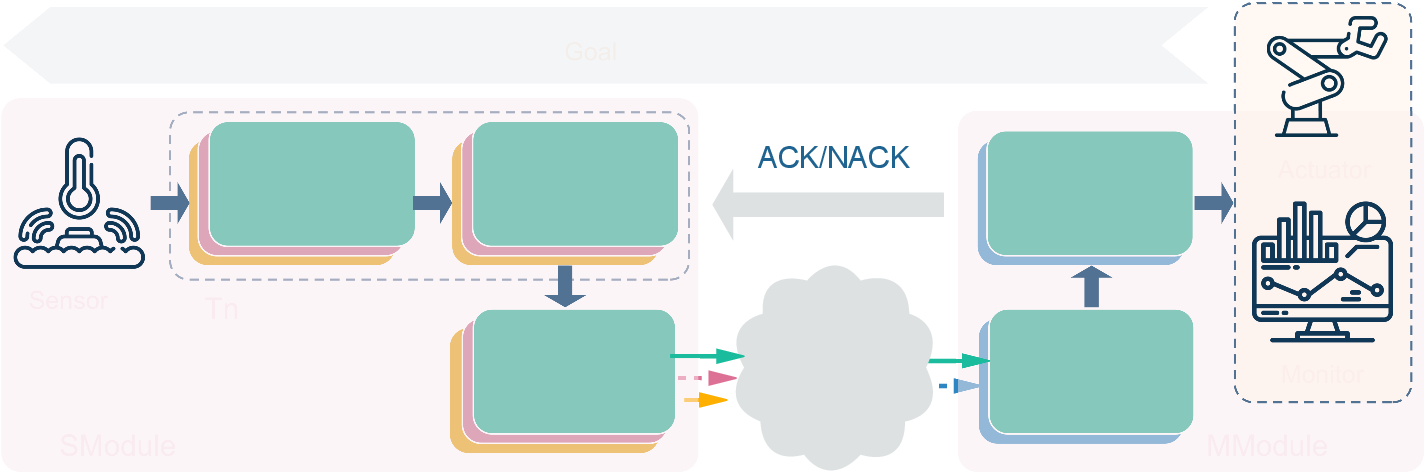}{
    \psfrag{Monitor}{\hspace{-0.18cm}\scriptsize Display}
    \psfrag{Actuator}{\hspace{-0.21cm}\scriptsize Actuator}
    \psfrag{Sensor}{\hspace{-0.06cm}\scriptsize Sensor}
    \psfrag{PHY}{\hspace{-0.09cm}\scriptsize PHY}
    \psfrag{ModNew}{\hspace{-0.07cm}\scriptsize module}
    \psfrag{Sem}{\hspace{-0.31cm}\scriptsize Semantic}
    \psfrag{filter}{\hspace{-0.07cm}\scriptsize filter}
    \psfrag{Source}{\hspace{-0.07cm}\scriptsize Source}
    \psfrag{encoder}{\hspace{-0.07cm}\scriptsize encoder}
    \psfrag{decoder}{\hspace{-0.07cm}\scriptsize decoder}
    \psfrag{P}{\hspace{-0.33cm}\scriptsize Block-}
    \psfrag{Er}{\hspace{-0.3cm}\scriptsize fading}
    \psfrag{Ch}{\hspace{-0.4cm}\scriptsize channels}
    \psfrag{ACK/NACK}{\hspace{-0.27cm}\scriptsize ACK/NACK}
    \psfrag{SModule}{\hspace{-0.2cm}\scriptsize $k$-th SSM}
    \psfrag{MModule}{\hspace{0.02cm}\scriptsize $m$-th MM}
    \psfrag{Tn}{\hspace{-0.06cm}\scriptsize Transmitter}
    \psfrag{Goal}{\hspace{-2cm}\scriptsize Application-dependent goals per layer}
    }
    \caption{The functional diagram of a communication link between the $k$-th SSM and the $m$-th MM.}
    \label{fig:fig2}
    \end{minipage}
\end{figure}

\subsection{Packet Transmission}
Assuming no buffer is employed at the transmitter of every SSM, a status update packet is blocked when the channel is busy. \emph{Semantics-aware packet filtering} is employed at every SSM for each of the connected MMs as a means to transfer only the most \emph{valuable and important} packets for effectively serving the goal or purpose of the data exchange, as well as having the least possible blockage of valuable arrivals due to heavy packet load. Specifically, at the $k$-th SSM, a semantic filter admits the $l_k^{(m)}$ most important realizations via a flow controller for the $m$-th MM and discards the rest, i.e., $n_k-l_k^{(m)}$. Thus, the index set of the most valuable arrivals at the $k$-th SSM from the perspective of the $m$-th MM is denoted by $\mathcal{I}_{l_k^{(m)}}$ where $\mathcal{I}_{l_k^{(m)}}\subseteq \mathcal{I}_{n_k}$. Once semantic filtering is performed, the source encoder at the $k$-th SSM assigns instantaneous (prefix-free) codewords with lengths $\ell_i^{(k,m)}$ to the status packets $x_i^{(k)}$ admitted for the $m$-th MM, $\forall m \in \mathcal{M}(k)$, based on the following truncated distribution
\begin{align}\label{sec2:eq1}
    p_i^{(k,m)} = \begin{cases}
    p_k^{(m)}\dfrac{\bar{p}_i^{(k)}}{q_{l_k^{(m)}}}, ~\forall i \in \mathcal{I}_{l_k^{(m)}}\\
    0, ~~~~~~~~~~\forall i \notin \mathcal{I}_{l_k^{(m)}}
    \end{cases}
\end{align}
where $q_{l_k^{(m)}} \coloneqq \sum_{i\in \mathcal{I}_{l_k^{(m)}}} \bar{p}_i^{(k)}$. 

Processed packets are then forwarded to the PHY module, mapped using a binary modulation scheme, and transmitted over a \emph{noisy orthogonal} channel subject to block fading. 

\subsection{Packet Reception}\label{Section2b}
We define $t_{j-1}^{(k,m)}$ the time instant that the $j$-th packet, $j=1,2,...$, is received at the $m$-th MM from the serving $k$-th SSM. The update interval between the $j$-th successive arrival and the next one at the same layer is then modeled as a random variable (r.v.) $Y_j^{(k,m)} = t_j^{(k,m)} - t_{j-1}^{(k,m)}$. Alternatively, this interval is formed as $Y_j^{(k,m)} = W_j^{(k,m)} + S_j^{(k,m)}$, where $W_j^{(k,m)}$ and $S_j^{(k,m)}$ indicate the waiting and service time variables, respectively. The waiting time denotes the span between the $j$-th packet and the previously delivered one at the same layer, which is written as $ W_j^{(k,m)} = \sum_{a=1}^{A_{j}^{(k,m)}} Z_{a}^{(k)}$. In this definition, $A_{j}^{(k,m)}$ follows a geometric distribution with success probability $(1 \!-\! \psi_k^{(m)}) q_{l_k^{(m)}}$ and indicates the number of packets discarded until the arrival of the $j$-th packet from the selected set. Herein, $0\leq\psi_k^{(m)}<1$ denotes the semantics-aware drop factor, which is analyzed in Section~\ref{Section4a}. Furthermore, $Z_{a}^{(k)}$ is the time between two arrivals, which is exponentially distributed with rate $\lambda_k$. Therefore, the admitted arrivals are generated under a Poisson process with rate $\lambda_k (1 \!-\! \psi_k^{(m)}) q_{l_k^{(m)}}$. Besides, the service time corresponds to the duration an update packet spends in the DMS until it is completely decoded at its destination. The service time analysis and channel error control protocols are investigated in Section~\ref{Section4b}. 

\section{Semantic Value Assessment of Update Packets} \label{Section3}
In this section, we present the semantic value extraction and assessment of both arrival and successfully decoded packets. This is performed at two different scales, namely a \emph{microscopic} one, which is related to the importance of the arrivals at the information source, and a \emph{mesoscopic} one, which captures the importance of received packets at a link level \cite{kountouris2021semantics}.

\subsection{Microscopic Scale}\label{Section3a}
At the source level, the relative importance of an arrival packet depends on \emph{extrinsic} and \emph{intrinsic} features, which in turn could be either \emph{time-sensitive} or \emph{time-tolerant}. Extrinsic features are related to the smart sensor characteristics, such as spatial location, resolution and measurement quality, battery level, and reliability. Intrinsic features depend on the properties of a packet related to goal/application requirements, e.g., probability of occurrence, urgency, and loss risk. Furthermore, the importance of a packet is assumed to have a lifetime, which remains \emph{constant} until that packet is successfully received at the receiver or dropped.

\subsubsection{Meta-value assignment} 
The goal-dependent importance of each packet can take on the form of a \emph{meta-value}, which is characterized by a function of the above features. 
In general, the meta-value of each packet results from \emph{knowledge fusion} or \emph{aggregation} of $A$ intrinsic and $B$ extrinsic features. To overcome the limitations of aggregation functions in the form of weighted sums, we resort to knowledge fusion \cite{warren1999strategic}, which takes into account possible interdependence between different features/criteria and provides commensurate scales representing the attributes using the Choquet integral \cite{choquet1954theory}. Specifically, using the discrete Choquet integral, a general meta-value formula at one arrival interval for the $i$-th realization of the $k$-th SSM connected to the $m$-th MM is given by 
\begin{align}\label{eq:meta}
    v_i^{(k,m)} &= \!\underbrace{\prod_{a=1}^{A} (g_{a,i}^{(k,m)})^{{\alpha}_a^{(k,m)}}}_\text{Intrinsic features}\underbrace{\!\left(\sum_{b=1}^{B} \!\left(U_{b}^{(k,m)} \!-\!U_{b-1}^{(k,m)}\right)\! (h_{b}^{(k,m)})^{\bar{\alpha}_{b}^{(k,m)}} \!\right)\!}_\text{Extrinsic features}
\end{align}
where $\alpha_a^{(k,m)}$ and $\bar{\alpha}_b^{(k,m)}$ are constant factors. Furthermore, defining $w_b^{(k,m)}\geq0$ as the weight of the $b$-th extrinsic feature, $U_{b}^{(k,m)}$ as the weight of the $b$-th subset of extrinsic features is given by 
\begin{equation}
    U_{b}^{(k,m)} = \left(1 \!+\!\lambda_g^{(k,m)}w_b^{(k,m)}\right)\! U_{b-1}^{(k,m)} + w_b^{(k,m)}
\end{equation}
where $U_{0}^{(k,m)}=0$, $U_{1}^{(k,m)} \leq U_{2}^{(k,m)} \leq ... \leq 1$, $U_{B}^{(k,m)}=1$, and $\lambda_g^{(k,m)}\geq-1$ comes from the Sugeno fuzzy measure \cite{sugeno1974theory}, as $1 +\lambda_g^{(k,m)} = \prod_{b=1}^B\big(1 \!+\!\lambda_g^{(k,m)}w_b^{(k,m)}\big)$. For $\lambda_g^{(k,m)}=0$, we have $U_{b}^{(k,m)} = U_{b-1}^{(k,m)} + w_b^{(k,m)}$, and the extrinsic part's fusion form in \eqref{eq:meta} becomes a weighted sum.
Also, $g_{a,i}^{(k,m)}:\mathbb{R} \!\to\! \mathbb{R}_0^+$ and $h_{b}^{(k,m)}:\mathbb{R} \!\to\! \mathbb{R}_0^+$ in \eqref{eq:meta} denote value functions (VFs) of intrinsic and extrinsic features, respectively. Here, every member of $g_{a,i}^{(k,m)}$, $\forall a,i,k,m$, or $h_{b}^{(k,m)}$, $\forall b,k,m$, with known parameters is presented via a general function $\operatorname{VF}:\mathbb{R} \!\to\! \mathbb{R}_0^+$. Thus, for $A$ intrinsic and $B$ extrinsic features, we have $A+B$ different versions of $\operatorname{VF}$.
Conventionally, we model $\operatorname{VF}$ as a sum of Gaussian functions with predefined critical points $z_{n_c}$, for $n_c=1, 2, ..., N_c$, relative criticality $\varpi_{n_c}\geq0$, and minimum importance $\operatorname{VF}_{\rm min}\geq0$. For a sample point $z$, we can write
\begin{align}\label{eq:vf}
    \operatorname{VF}(z) = \dfrac{\widehat{\operatorname{VF}}(z) }{\underset{z}{\operatorname{max}}\Big\{\widehat{\operatorname{VF}}(z) \Big\}}
\end{align}
where, if $z_{\rm min} \leq z \leq z_{\rm max}$, we have
\begin{align}\label{eq:vfb}
    \widehat{\operatorname{VF}}(z) = \sum_{n_c=1}^{N_c} \varpi_{n_c} e^{-\dfrac{(z-z_{n_c})^2}{2\sigma^2}}.
\end{align}
In \eqref{eq:vfb}, the standard deviation $\sigma$ is derived such that all sample points of $\operatorname{VF}$ get higher importance than a given threshold $\operatorname{VF}_{\rm min}$. For that, with given $N_c$, $z_{n_c}$, and $\varpi_{n_c}$, $\sigma$ starts from a small value and increases with a fixed step over a finite number of iterations until $\operatorname{VF}(z) \geq \operatorname{VF}_{\rm min}$, $\forall z$. 

The proposed value assessment scheme can operate under both pull-based and push-based communication models. Assume a packet from realization $x_i^{(k)}$, $\forall i \in \mathcal{I}_{l_k^{(m)}}$ arrives at the $j$-th interval and is assigned importance value $v_j^{(k,m)}$. In the push-based model, the value of the generated packet comes from \eqref{eq:meta}, i.e., $v_j^{(k,m)}=v_i^{(k,m)}$, whereas the pull-based policy adds a constraint such that $v_j^{(k,m)}=v_i^{(k,m)}$ only if the $m$-th MM has requested an update from the $k$-th SSM at the $j$-th interval; otherwise, $v_j^{(k,m)}=0$.

\subsection{Mesoscopic Scale}\label{Section3b}
At the link level, the importance of a received packet at a destination is measured using the \emph{semantics of information} (SoI) \cite{kountouris2021semantics,pappas2021goal}. In this regard, we consider \emph{timeliness} as a contextual attribute of information, which is a non-increasing function $f_k:\mathbb{R}_0^+ \to \mathbb{R}$ of the freshness of information from the $k$-th sensor. The instantaneous SoI provided by the $k$-th SSM, $\forall k \in \mathcal{K}(m)$, for the $m$-th MM at time $t$ is modeled as $\mathcal{S}_k^{(m)}(t) = f_k(\Delta_k^{(m)}(t))$, where $\Delta_k^{(m)}(t) = t – u(t)$ denotes the instantaneous AoI at the $m$-th MM, which is defined as the difference between the current time instant $t$ and the timestamp $u(t)$ of the most recently arrived update from the considered $k$-th SSM or another identical sensor connected to that monitor.

The average SoI over an observation interval $(0, T)$, assuming a stationary ergodic process, is given by
\begin{equation}\label{sec3b:eq2}
    \bar{\mathcal{S}}_k^{(m)} = \displaystyle \underset{T\rightarrow\infty}{\lim} \dfrac{1}{T} \!\int_{0}^{T} f_k(\Delta_k^{(m)}(t)) {\rm d}t.
\end{equation}

From a monitor's perspective, the SoI at a layer reserved for the $k$-th SSM decreases according to $f_k(\cdot)$ until a valuable status update for that layer arrives. Then, the SoI rises to the value of the new update at that time. Therefore, dissimilar monitors potentially attain different SoIs over similar periods.

In this paper, we study the following forms for the SoI, namely \emph{exponential} (E-), \emph{logarithmic} (L-), and \emph{reciprocal} utility of timeliness (RUT) cases, i.e., 
\begin{align}\label{sec5:eq3}
    f_k(\Delta_k^{(m)}(t))=
    \begin{cases}
    \operatorname{exp}(-\rho_k^{(m)}(t)\Delta_k^{(m)}(t)) + \beta_k^{(m)}~~~\text{EUT case}\\
    \ln(-\rho_k^{(m)}(t)\Delta_k^{(m)}(t))+ \beta_k^{(m)}~~~~~\text{LUT case}\\
    (\rho_k^{(m)}(t)\Delta_k^{(m)}(t))^{-\kappa}+ \beta_k^{(m)}~~~~~~\text{RUT case}
    \end{cases}
\end{align}
where $\kappa \in \mathbb{Z}^+$ and $\beta_k^{(m)}$ are constant parameters, and $\rho_k^{(m)}(t)>0$ is an attenuation factor that is fixed during each update interval of the $k$-th SSM and is reinitialized for a new arrival.\footnote{The analysis is extendable for any (single or composite) non-increasing forms for the utility of timeliness.} The value of the attenuation factor at an update interval comes from the importance of that admitted arrival at the source level. Hence, we can define
\begin{align}\label{sec5:eq4}
    \!\rho_k^{(m)}(t) &\coloneqq  \rho_j^{(k,m)}\mathbbm{1}\!\left\{t_{j-1}^{(k,m)} + S_j^{(k,m)} \!\leq t \leq t_{j}^{(k,m)} + S_{j+1}^{(k,m)}\right\}
\end{align}
where $\rho_j^{(k,m)} = \rho_{\rm min} + \bigg(\underset{i \in \mathcal{I}_{l_k^{(m)}} }{\operatorname{max}}\!\!\Big\{v_i^{(k,m)}\Big\} - v_j^{(k,m)}\!\bigg)\!\big(\rho_{\rm max} \!-\! \rho_{\rm min}\big)$
for the $j$-th admitted arrival. $\rho_{\rm min}$ and $\rho_{\rm max}$ are the minimum and maximum values that the attenuation factors may attain, respectively. Evidently, the higher the importance of an arrival, the lower the attenuation factor it is assigned.

\section{Semantic Filtering and Channel Error Control}\label{Section4}
In this section, we introduce a semantics-aware packet filtering mechanism and analyze two asynchronous error control schemes.

\subsection{Semantics-Aware Filtering}\label{Section4a}
We consider two semantics-aware packet filtering schemes, which control the flow of update packet arrivals through different levels of each SSM, namely \emph{fixed-} and \emph{adaptive-length} filtering. In the light of assessed meta-values and freshness, the rationale of these filtering mechanisms is to feed the transmitter with important arrivals, while keeping the blockage rate as low as possible. The blockage rate quantifies the number of blockages occurring to the admitted packets due to heavy channel load over the total number of arrival packets.

\subsubsection{Fixed-length filtering}
A fixed-length filter merely admits packets generated from the $l_k^{(m)}$ most important realizations among all $n_k$ realizations the $k$-th SSM shares with the $m$-th MM. Packets of less important realizations, based on the given index set $\mathcal{I}_{l_k^{(m)}}$, are directly discarded, independently of their freshness (e.g., most recent realization) or the channel idleness. Therefore, even the most recent (highest freshness) yet less important packets are blocked.

\subsubsection{Adaptive-length filtering}
In the proposed adaptive-length filtering mechanism, packets' importance at both microscopic and mesoscopic scales is taken into account, which gives a chance to a subset of the least important packets, in addition to the most important ones, to be carried over the DMS. Differently from the fixed-length filtering, only a fraction of the least important packets are directly discarded in the adaptive-length filtering scheme, and a subset of the fresh yet of lower importance arrivals may adaptively pass the filter. This means that the number of admitted realizations in adaptive-length filtering is larger or equal to that of the fixed-length one. Specifically, upon the arrival of a packet generated from one of the admitted realizations when the channel is idle, the adaptive-length filter inspects whether that packet can increase the SoI at the link level after being decoded, thanks to its freshness or not. If the inspection outcome is positive, the packet passes the filter. Thus, adaptive-length filtering brings flexibility owing to which packets with lower importance but higher freshness have a chance to pass the filter.

\subsubsection{Acceptance probability analysis}
A key element of adaptive-length filtering is a parameter that quantifies the probability that a packet generated from the selected realizations is rejected (or accepted), considering previously observed updates at the link level. As mentioned in Section~\ref{Section2b}, we coin this as semantics-aware drop factor. Using this, a new arrival (status update) that does not increase the SoI upon receipt at the monitor, despite being the most recent/fresh, is rejected, and hence not transmitted.    
To make it clear, Figure~\ref{fig:fig3} illustrates the shape of packet drop by adaptive-length filtering at the $k$-th SSM connected to the $m$-th MM under three possible conditions listed below, considering a non-increasing $f_k(\cdot)$, and convex for the purpose of semantic filtering.
\begin{itemize}
    \item[I.] The first and second arrivals are of comparable importance, hence attenuation factors. Therefore, the SoI curves provided by these arrivals intersect at infinity. In this case, the filter does not discard the new arrival since it may offer a higher value upon receipt at its monitor.
    \item[II.] The $j$-th arrival has higher importance than the $(j\!+\!1)$-th one, where the primary arrival crosses the latter one before both reach $\beta_k^{(m)}$. However, the cross point is after the service time of the new arrival, i.e., $t_{j}^{(k,m)} + S_{j+1}^{(k,m)}$. Therefore, the filter keeps the $(j\!+\!1)$-th packet since it can increase $\mathcal{S}_k^{(m)}(t)$ after its successful decoding.
    \item[III.] The $j^\prime$-th arrival has higher importance compared to the $(j^\prime\!+\!1)$-th one; thus, the primary arrival crosses the latter one at a point before the delivery of that new packet. In that case, the filter discards the $(j^\prime\!+\!1)$-th packet since it cannot bring a higher value at the link level even after its successful delivery. Thanks to this mechanism, the system obtains better $\mathcal{S}_k^{(m)}(t)$, as highlighted in the figure.
\end{itemize}

\begin{figure}[t!]
    \centering
    \pstool[scale=0.8]{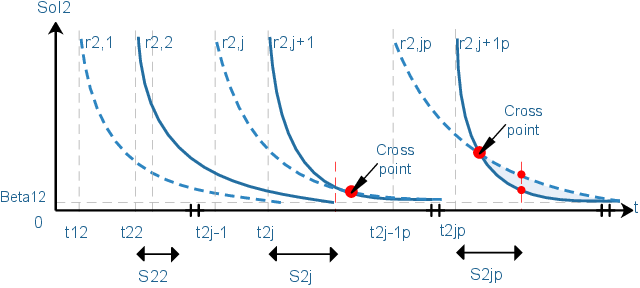}{
    \psfrag{SoI2}{\hspace{-0.2cm}\scriptsize $ \mathcal{S}_k^{(m)}(t)$}
    \psfrag{t}{\hspace{0.04cm}\scriptsize $t$}
    \psfrag{0}{\hspace{0.0cm}\scriptsize $0$}
    \psfrag{t12}{\hspace{-0.05cm}\scriptsize $t_{1}^{(k,m)}$}
    \psfrag{t22}{\hspace{-0.05cm}\scriptsize $t_{2}^{(k,m)}$}
    \psfrag{t1j-1}{\hspace{-0.05cm}\scriptsize $t_{j-1}^{(k,m)}$}
    \psfrag{t2j-1}{\hspace{-0.05cm}\scriptsize $t_{j-1}^{(k,m)}$}
     \psfrag{t2j-1p}{\hspace{-0.05cm}\scriptsize $t_{j^\prime-1}^{(k,m)}$}
    \psfrag{t1j}{\hspace{-0.05cm}\scriptsize $t_{j}^{(k,m)}$}
    \psfrag{t2j}{\hspace{-0.05cm}\scriptsize $t_{j}^{(k,m)}$}
    \psfrag{t2jp}{\hspace{-0.05cm}\scriptsize $t_{j^\prime}^{(k,m)}$}
    \psfrag{S2j}{\hspace{-0.15cm}\scriptsize $S_{j+1}^{(k,m)}$}
    \psfrag{S2jp}{\hspace{-0.1cm}\scriptsize $S_{j^\prime+1}^{(k,m)}$}
    \psfrag{S22}{\hspace{-0.05cm}\scriptsize $S_{3}^{(k,m)}$}
    \psfrag{W1j}{\hspace{-0.09cm}\scriptsize $W_{j}^{(k,m)}$}
    \psfrag{W2j}{\hspace{-0.25cm}\scriptsize $W_{j}^{(k,m)}$}
    \psfrag{Q1,j}{\hspace{-0.1cm}\scriptsize $Q_{j}^{(k,m)}$}
    \psfrag{Q2,j}{\hspace{-0.18cm}\scriptsize $Q_{j}^{(k,m)}$}
    \psfrag{r2,1}{\hspace{0.01cm}\scriptsize $\rho_1^{(k,m)}$}
    \psfrag{r2,2}{\hspace{-0.02cm}\scriptsize $\rho_2^{(k,m)}$}
    \psfrag{r2,j+1}{\hspace{0.02cm}\scriptsize $\rho_{j+1}^{(k,m)}$}
    \psfrag{r2,j}{\hspace{0.0cm}\scriptsize $\rho_j^{(k,m)}$}
    \psfrag{r2,j+1p}{\hspace{0.02cm}\scriptsize $\rho_{j^\prime+1}^{(k,m)}$}
    \psfrag{r2,jp}{\hspace{0.0cm}\scriptsize $\rho_{j^\prime}^{(k,m)}$}
    \psfrag{Beta12}{\hspace{0.05cm}\scriptsize $\beta_k^{(m)}$}
    \psfrag{Cross}{\hspace{0.0cm}\scriptsize Cross}
    \psfrag{point}{\hspace{0.0cm}\scriptsize point}
    }
    \caption{Three conditions for packet drop via adaptive-length filtering.}
    \label{fig:fig3}
\end{figure}

The semantics-aware drop factor is denoted by $\psi_k^{(m)}$ for the link between the $k$-th SSM and $m$-th MM. Here, $\psi_k^{(m)}=1$ for the fixed-length filtering, while we can define
\begin{align}\label{sec4:eq1}
     \psi_k^{(m)} &=   \underset{T\rightarrow\infty}{\lim}\frac{1}{\mathcal{N}_k^{(m)}(T)} \sum_{j=1}^{\mathcal{N}_k^{(m)}(T)} \sum_{d=1}^{d_{\rm max}}
    \mathbbm{1}\!\left\{\dfrac{\rho_{j+d}^{(k,m)}}{\rho_{j}^{(k,m)}}
     \!>\! \tau^{(k,m)}_{d}\!\right\} \nonumber \\ 
     &\simeq  \underset{i \in \mathcal{I}_{n_k}}{\mathbb{E}} \!\Bigg[  \dfrac{1}{d_{\rm max}} \sum_{d=1}^{d_{\rm max}}
    {\rm Pr}\!\left\{i^\prime \in \mathcal{I}_{l_k^{(m)}} \bigg|\, \dfrac{\rho_{i^\prime}^{(k,m)}}{\rho_{i}^{(k,m)}}
     \!>\! \tau_d^{(k,m)}\!\right\}\!\Bigg]
\end{align}
for the adaptive-length filtering scheme.
In \eqref{sec4:eq1}, $\mathcal{N}_k^{(m)}(T)$ is the number of all admitted packets at the $k$-th SSM for the $m$-th MM by time $T$, and $d_{\rm max}\ll\mathcal{N}_k^{(m)}(T)$ for $T\!\rightarrow\!\infty$. Also, $\tau^{(k,m)}_{d}$ denotes a threshold for the drop of order $d$, which is computed in the following lemma.
\begin{lemma} \label{lemm1}
The threshold $\tau^{(k,m)}_{d}$ in \eqref{sec4:eq1} is derived as  
\begin{equation}\label{sec4:eq2}
    \tau^{(k,m)}_{d} = (d\!+\!1) + \dfrac{\widehat{W}_d^{(k,m)}}{\ell_{\rm max}}  
\end{equation}
where $\widehat{W}_d^{(k,m)}$ is an Erlang r.v. of order $d$ with rate $\lambda_kq_{l_k^{(m)}}$, and $\ell_{\rm max}$ indicates an upper bound to which the size of a codeword length converges. Notably, we reach $\tau^{(k,m)}_{d} = d+1$ if $\ell_{\rm max}\!\rightarrow\!\infty$.
\end{lemma}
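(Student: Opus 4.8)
The plan is to track the two status-update curves from consecutive admitted arrivals, the $j$-th and the $(j+d)$-th, and to determine exactly when the earlier curve has decayed enough that the later (fresher) packet would raise the SoI upon its delivery. Since the utility of timeliness $f_k(\cdot)$ in \eqref{sec5:eq3} is non-increasing in its argument and the argument scales linearly through the attenuation factor $\rho_j^{(k,m)}$, the relevant comparison reduces to comparing $\rho_j^{(k,m)}\Delta_k^{(m)}(t)$ along the old curve against $\rho_{j+d}^{(k,m)}\Delta_k^{(m)}(t)$ along the new one, evaluated at the delivery instant $t_{j+d-1}^{(k,m)}+S_{j+d}^{(k,m)}$. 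Condition III in the itemized list before the lemma is precisely the event that the new packet does \emph{not} increase the SoI, i.e.\ the old curve still lies above the new one at that instant; the drop factor $\psi_k^{(m)}$ in \eqref{sec4:eq1} counts exactly these events, so the threshold $\tau_d^{(k,m)}$ must be the value of the ratio $\rho_{j+d}^{(k,m)}/\rho_j^{(k,m)}$ at which the two curves cross precisely at the delivery instant.

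First I would write the AoI of the old curve at the delivery instant of the $(j+d)$-th packet. Between its own delivery and the delivery of the next admitted packet, the age of the $j$-th packet at time $t$ is $t - (t_{j-1}^{(k,m)}+S_j^{(k,m)}) + (\text{initial age at reception})$; normalizing codeword lengths so that service time is measured in symbol durations, the age accrued up to the delivery of the $(j+d)$-th packet is the sum of $d$ inter-admission gaps plus $(d+1)$ service times, whereas the new packet is delivered with age equal to one service time $S_{j+d}^{(k,m)}\le \ell_{\max}$. The $d$ inter-admission gaps sum to an Erlang random variable $\widehat{W}_d^{(k,m)}$ of order $d$ with rate $\lambda_k q_{l_k^{(m)}}$, since admitted arrivals from the selected set form a Poisson process of that rate (as established in Section~\ref{Section2b}). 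Equating $\rho_j^{(k,m)}\bigl((d{+}1)\ell_{\max}+\widehat{W}_d^{(k,m)}\bigr) = \rho_{j+d}^{(k,m)}\,\ell_{\max}$ at the crossover and solving for the ratio yields
\[
\frac{\rho_{j+d}^{(k,m)}}{\rho_j^{(k,m)}} = (d+1) + \frac{\widehat{W}_d^{(k,m)}}{\ell_{\max}},
\]
which is exactly \eqref{sec4:eq2}; the $\ell_{\max}\to\infty$ limit giving $\tau_d^{(k,m)}=d+1$ then drops out immediately, since the waiting-time contribution, measured in units of $\ell_{\max}$, vanishes while the $(d{+}1)$ service-time terms each contribute a full $\ell_{\max}$.

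The main obstacle I anticipate is justifying the reduction from the SoI-crossing condition to a clean age-crossing condition, i.e.\ arguing that because $f_k(\cdot)$ is the \emph{same} non-increasing (and, per the figure, convex) function on both curves and differs only through the multiplicative attenuation factor, the ordering of $f_k(\rho\,\Delta)$ values is governed solely by the ordering of the products $\rho\,\Delta$ at the delivery instant — so the crossover is characterized by equality of those products regardless of which of the E-, L-, or RUT forms is used. A secondary point requiring care is the bookkeeping that the old curve has accumulated exactly $d{+}1$ service times and $d$ waiting intervals by the delivery of the $(j{+}d)$-th packet (one service time per admitted packet $j, j{+}1,\dots,j{+}d$, interleaved with $d$ waiting gaps), together with the normalization that lets us identify service time with codeword length bounded by $\ell_{\max}$; once that accounting is fixed, the remaining algebra is the one-line rearrangement above.
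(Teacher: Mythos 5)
Your proposal is correct and follows essentially the same route as the paper's proof: you compare the attenuation-scaled age curves of the $j$-th and $(j+d)$-th packets, place the crossover at the delivery instant of the latter, bound the $d+1$ service times by $\ell_{\rm max}$, and identify the $d$ accumulated waiting gaps as the Erlang variable $\widehat{W}_d^{(k,m)}$, which gives $\tau_d^{(k,m)}=(d+1)+\widehat{W}_d^{(k,m)}/\ell_{\rm max}$ exactly as in the paper. The only nuance is that the rate $\lambda_k q_{l_k^{(m)}}$ pertains to the waiting times of selected-set arrivals prior to the adaptive drop (the admitted stream itself has rate $\lambda_k(1-\psi_k^{(m)})q_{l_k^{(m)}}$), which is precisely how the paper's proof uses it.
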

\begin{proof} See Appendix~\ref{lemm:app1}.
\end{proof}

\subsection{Channel Error Control and Service Time Analysis}\label{Section4b}
We utilize two non-adaptive asynchronous error control schemes for handling transmission errors over block-fading channels, namely ARQ and truncated HARQ based on chase combining. In this regard, an SSM connected to one of the served MMs is equipped with an individual buffer to support ARQ and HARQ protocols. The successful delivery of each packet to an MM is declared by an instantaneous and error-free acknowledgment (ACK) feedback to the serving SSM (see Figure~\ref{fig:fig2}). However, in the event of failure at each monitor, the serving SSM retransmits the packet only to the MM from which a negative ACK (NACK) message is received.
Packets are retransmitted either until successful reception or up to the maximum allowable number of transmission rounds. After successful delivery to its destination or reaching the retransmission limit (in which case the packet is dropped), the transmitter waits for a new admitted arrival.\footnote{The computation and propagation delays are assumed negligible for both packet transmission and feedback processes.}

Consequently, the service time of packet $x_i^{(k)}$, which is tagged important for the $m$-th MM during the $j$-th arrival, is a function of its codeword length, the channel conditions, and the number of transmission rounds according to the error control protocol. Thus, we consider $\mathbb{E}[(S^{(k,m)})^c] = \mathbb{E}\!\left[\varphi^{(k,m)} (L^{(k,m)})^c\right]$ the $c$-th moment of the service time for packets being delivered to the $m$-th MM from the $k$-th SSM, where 
\begin{align}\label{sec2:eq2}
\mathbb{E}\!\left[\varphi^{(k,m)} (L^{(k,m)})^c\right] = \underset{T\rightarrow\infty}{\lim}\frac{1}{\mathcal{N}_k^{(m)}(T)} \sum_{j=1}^{\mathcal{N}_k^{(m)}(T)}  \sum_{r=1}^{r^{(k,m)}_{{\rm max}, j}} \varphi_{j,r}^{(k,m)} (\ell_j^{(k,m)})^c
\end{align}
since $f_k(\cdot)$ is non-increasing. Herein, $r$ is the order of transmission and truncated by $r^{(k,m)}_{{\rm max}, j} \ll \mathcal{N}_k^{(m)}(T)$, where $r^{(k,m)}_{{\rm max}, j}$ depends on the $j$-th arrival's meta-value, i.e., $v_j^{(k,m)}$. As a simple arbitrary form, we can write
\begin{align}\label{eq:rmaxeq}
    r_{{\rm max}, j}^{(k,m)} = \left(\!\dfrac{v_j^{(k,m)}} {\frac{1}{l_k^{(m)}} \sum_{i \in \mathcal{I}_{l_k^{(m)}}}\!v_i^{(k,m)}}\!\right)\! r_{\rm max}
\end{align}
for the $j$-th arrival. According to \eqref{eq:rmaxeq}, higher sample/packet importance results in larger $r^{(k,m)}_{{\rm max}, j}$. Besides, $\varphi_{j,r}^{(k,m)}$ is relevant to error control processes for the $j$-th arrival packet that can be transmitted up to $r$ times before reaching $r^{(k,m)}_{{\rm max}, j}$. In this regard, $\varphi_{j,r}^{(k,m)}$ is given by\cite{parag2017real}
\begin{align}\label{def1b}
    &\varphi_{j,r}^{(k,m)}=\dfrac{c_j^{(k,m)}}{1 - \theta_{r, j}^{(k,m)}} \!\left(r\theta_{r, j}^{(k,m)} \!+\! \sum_{r^\prime=1}^{r} r^\prime\! \left(\theta_{r^\prime-1, j}^{(k,m)} \!-\! \theta_{r^\prime, j}^{(k,m)}\right) \!\right)
\end{align}
where $c_j^{(k,m)}\geq 1$ indicates the reverse of the channel coding rate with $c_j^{(k,m)}=1$ for the ARQ protocol, and $\theta_{r, j}^{(k,m)}$ denotes the probability that the first $r$ transmissions of the $j$-th packet from the $k$-th SSM to the $m$-th MM are performed with error. 
For the chase combining HARQ, $\theta_{r,j}^{(k,m)}$, $\forall r,j,k,m\in\mathcal{M}(k)$, is given by \cite{lagrange2010throughput}
\begin{align}\label{theta1}
\theta_{r,j}^{(k,m)}=e^{-\dfrac{\gamma_{M,j}^{(k,m)}}{\bar{\gamma}_k^{(m)}}} \times\! \left(\sum_{i=r}^{\infty} \dfrac{\left(\!\dfrac{\gamma_{M,j}^{(k,m)}}{\bar{\gamma}_k^{(m)}}\!\right)^{\!\!i}}{i !} + \sum_{i=0}^{r-1} \dfrac{\left(\!\dfrac{\gamma_{M,j}^{(k,m)}}{\bar{\gamma}_k^{(m)}}\!\right)^{\!\!i}}{i !} \prod_{i^\prime=1}^{r-i} \dfrac{1}{1+i^\prime g_j^{(k,m)} \bar{\gamma}_k^{(m)}}\right),
\end{align}
while for ARQ we have
\begin{align}\label{theta2}
&\theta_{r,j}^{(k,m)}=\left(1 - \dfrac{g_j^{(k,m)}\bar{\gamma}_k^{(m)}}{1 + g_j^{(k,m)}\bar{\gamma}_k^{(m)}} e^{-\dfrac{\gamma_{M,j}^{(k,m)}}{\bar{\gamma}_k^{(m)}}}\right)^{\!\!r}.
\end{align}
Expressions \eqref{theta1} and \eqref{theta2} are obtained by applying an approximated expression for packet error rates as a function of the signal-to-noise ratio (SNR) for Rayleigh fading channels \cite{lagrange2010throughput, liu2004cross}, where $\bar{\gamma}_k^{(m)}$ is the average received SNR at the $m$-th MM from the $k$-th SSM. Fitting the approximated expression to the exact formula, we find $\gamma_{M,j}^{(k,m)}$ and $g_j^{(k,m)}$ under a considered modulation scheme.

\section{Semantic Source Coding}\label{Section5}
In this section, we formulate the problem of semantics-aware packet encoding, which aims to maximize the semantic value of packets delivered at the served monitors, and whose solution provides the optimal codeword lengths assigned to admitted packets at each sensor. 

\subsection{Problem Statement}
The objective of optimal codeword length assignment is to maximize the weighted sum of the average SoIs provided at all layers of MMs, subject to the following two constraints: (i) codeword lengths should be positive integers, i.e., $\ell_i^{(k,m)} \in \mathbb{Z}^+$ (feasibility); (ii) existence of a prefix-free (or uniquely decodable) code for a given set of codeword lengths at each SSM (Kraft-McMillan inequality \cite{cover1999elements}). Therefore, the optimization problem is formulated as
\begin{equation}\label{optim1}
\begin{aligned}
    &\mathcal{P}_1\!:\,
    \underset{{\{\ell_i^{(k,m)}\}}}{\text{maximize}}~~ \sum_{m=1}^{M} w_m \sum_{k \in \mathcal{K}(m)} p_k^{(m)}\bar{\mathcal{S}}_k^{(m)} \coloneqq \sum_{m=1}^{M} \sum_{k \in \mathcal{K}(m)} \mathcal{J}_{{\rm SoI}}^{(k,m)} \\
    &\text{subject to} \begin{cases} 
    \mathcal{C}_1: \sum_{i\in \mathcal{I}_{l_k^{(m)}}} 2^{-\ell_i^{(k,m)}} \leq 1, \forall k, m\\
    \mathcal{C}_2: \ell_i^{(k,m)} \in \mathbb{Z}^+, \forall i, k, m
    \end{cases}
\end{aligned}
\end{equation}
where $w_m$ denotes a weight parameter.

\begin{figure}[t!]
    \centering
    \subfloat[]{
    \pstool[scale=0.61]{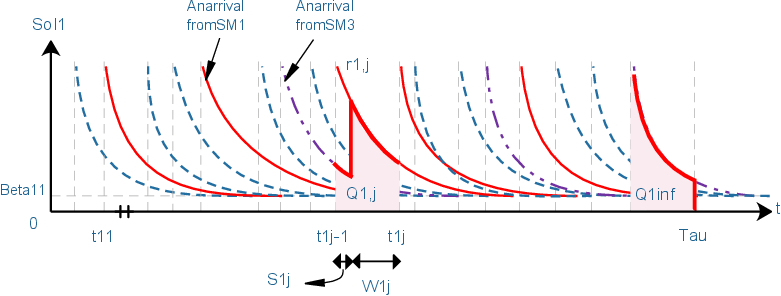}{
    \psfrag{SoI1}{\hspace{-0.2cm}\scriptsize $ \mathcal{S}_1^{(1)}(t)$}
    \psfrag{SoI2}{\hspace{-0.2cm}\tiny $ \mathcal{S}_2^{(1)}(t)$}
    \psfrag{t}{\hspace{0.0cm}\scriptsize $t$}
    \psfrag{0}{\hspace{0.0cm}\scriptsize $0$}
    \psfrag{t11}{\hspace{-0.1cm}\tiny $t_1^{(1,1)}$}
    \psfrag{t12}{\hspace{-0.1cm}\tiny $t_{1}^{(2,1)}$}
    \psfrag{t1j-1}{\hspace{-0.06cm}\tiny $t_{j-1}^{(1,1)}$}
    \psfrag{t2j-1}{\hspace{-0.06cm}\tiny $t_{j-1}^{(2,1)}$}
    \psfrag{t1j}{\hspace{-0.06cm}\tiny $t_{j}^{(1,1)}$}
    \psfrag{t2j}{\hspace{-0.06cm}\tiny $t_{j}^{(2,1)}$}
    \psfrag{S1j}{\hspace{-0.07cm}\tiny $S_{j}^{(1,1)}$}
    \psfrag{S2j}{\hspace{-0.07cm}\tiny $S_{j}^{(2,1)}$}
    \psfrag{W1j}{\hspace{-0.09cm}\tiny $W_{j}^{(1,1)}$}
    \psfrag{W2j}{\hspace{-0.25cm}\tiny $W_{j}^{(2,1)}$}
    \psfrag{Q1,j}{\hspace{-0.1cm}\tiny $Q_{j}^{(1,1)}$}
    \psfrag{Q2,j}{\hspace{-0.18cm}\tiny $Q_{j}^{(2,1)}$}
    \psfrag{r1,j}{\hspace{-0.05cm}\tiny $\rho_j^{(1,1)}$}
    \psfrag{r2,j}{\hspace{-0.00cm}\tiny $\rho_j^{(2,1)}$}
    \psfrag{Q1inf}{\hspace{-0.05cm}\tiny $Q_{\infty}^{(1,1)}$}
    \psfrag{Q2inf}{\hspace{-0.1cm}\tiny $Q_{\infty}^{(2,1)}$}
    \psfrag{Tau}{\hspace{0.06cm}\scriptsize $T$}
    \psfrag{Anarrival}{\hspace{-0.2cm}\scriptsize }
    \psfrag{fromSM1}{\hspace{-0.02cm}\scriptsize $\text{SSM}_1$}
    \psfrag{fromSM3}{\hspace{-0.15cm}\scriptsize $\text{SSM}_3$}
    \psfrag{Beta11}{\hspace{-0.03cm}\scriptsize $\beta_1^{(1)}$}
    \psfrag{Beta12}{\hspace{-0.02cm}\scriptsize $\beta_2^{(1)}$}
    }}
    \subfloat[]{    
    \pstool[scale=0.61]{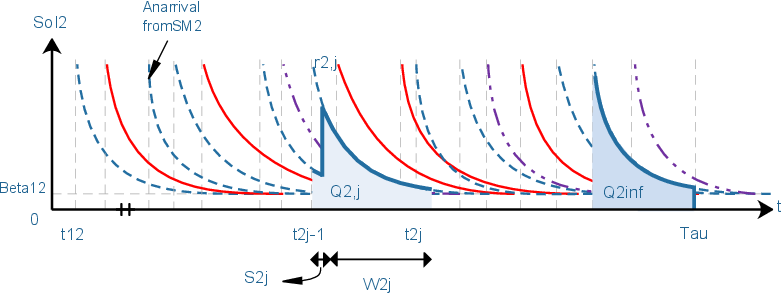}{
    \psfrag{SoI1}{\hspace{-0.2cm}\scriptsize $ \mathcal{S}_1^{(1)}(t)$}
    \psfrag{SoI2}{\hspace{-0.2cm}\scriptsize $ \mathcal{S}_2^{(1)}(t)$}
    \psfrag{t}{\hspace{0.0cm}\scriptsize $t$}
    \psfrag{0}{\hspace{0.0cm}\scriptsize $0$}
    \psfrag{t11}{\hspace{-0.1cm}\tiny $t_1^{(1,1)}$}
    \psfrag{t12}{\hspace{-0.1cm}\tiny $t_{1}^{(2,1)}$}
    \psfrag{t1j-1}{\hspace{-0.06cm}\tiny $t_{j-1}^{(1,1)}$}
    \psfrag{t2j-1}{\hspace{-0.06cm}\tiny $t_{j-1}^{(2,1)}$}
    \psfrag{t1j}{\hspace{-0.06cm}\tiny $t_{j}^{(1,1)}$}
    \psfrag{t2j}{\hspace{-0.06cm}\tiny $t_{j}^{(2,1)}$}
    \psfrag{S1j}{\hspace{-0.07cm}\tiny $S_{j}^{(1,1)}$}
    \psfrag{S2j}{\hspace{-0.07cm}\tiny $S_{j}^{(2,1)}$}
    \psfrag{W1j}{\hspace{-0.09cm}\tiny $W_{j}^{(1,1)}$}
    \psfrag{W2j}{\hspace{-0.25cm}\tiny $W_{j}^{(2,1)}$}
    \psfrag{Q1,j}{\hspace{-0.1cm}\tiny $Q_{j}^{(1,1)}$}
    \psfrag{Q2,j}{\hspace{-0.18cm}\tiny $Q_{j}^{(2,1)}$}
    \psfrag{r1,j}{\hspace{-0.05cm}\tiny $\rho_j^{(1,1)}$}
    \psfrag{r2,j}{\hspace{-0.00cm}\tiny $\rho_j^{(2,1)}$}
    \psfrag{Q1inf}{\hspace{-0.05cm}\tiny $Q_{\infty}^{(1,1)}$}
    \psfrag{Q2inf}{\hspace{-0.1cm}\tiny $Q_{\infty}^{(2,1)}$}
    \psfrag{Tau}{\hspace{0.06cm}\scriptsize $T$}
    \psfrag{Anarrival}{\hspace{-0.2cm}\scriptsize }
    \psfrag{fromSM2}{\hspace{-0.1cm}\scriptsize $\text{SSM}_2$}
    \psfrag{Beta11}{\hspace{-0.03cm}\scriptsize $\beta_1^{(1)}$}
    \psfrag{Beta12}{\hspace{-0.02cm}\scriptsize $\beta_2^{(1)}$}
    }}
    \vspace{-0.2cm}
    \caption{Sample evolution for the EUT case, where (a) $\text{MM}_1$ receives updates from identical $\text{SSM}_1$ and $\text{SSM}_3$, and (b) $\text{MM}_2$ receives packets from $\text{SSM}_2$.}
    \label{fig:fig4}
\end{figure}
To derive $\bar{\mathcal{S}}_k^{(m)}$, we use different forms of $f_k(\cdot)$ defined in \eqref{sec5:eq3}, considering the type and functionality of the $k$-th sensor. Thus, from \eqref{sec3b:eq2} and \eqref{sec5:eq3}, the average SoI can be computed for the proposed cases. To do so, we divide the non-negative area below the curve of $f_k(\Delta_k^{(m)}(t))$ over interval $(0,T)$ into polygons of $Q_{j}^{(k,m)}$, $j=1, 2, ..., \mathcal{N}_k^{(m)}(T)$, and $Q_{\infty}^{(k,m)}$, as depicted in Figure~\ref{fig:fig4} for the EUT case. Thus, we can rewrite \eqref{sec3b:eq2} as
\begin{eqnarray}\label{sec5:eq5}
    \bar{\mathcal{S}}_k^{(m)}=\underset{T\rightarrow\infty}{\lim} \dfrac{1}{T}\! \left(\sum_{j=1}^{\mathcal{N}_k^{(m)}(T)} Q_{j}^{(k,m)} \!+ Q_{\infty}^{(k,m)} \!\right)\!
    = \eta_k^{(m)} \mathbb{E}[Q^{(k,m)}]
\end{eqnarray}
where $\eta_k^{(m)} = \underset{T\rightarrow\infty}{\lim}\dfrac{1}{T}\!\left(\mathcal{N}_k^{(m)}(T)\!-\!1\right)$ denotes the steady-state time average arrival rate. Importing \eqref{sec5:eq5} into \eqref{optim1}, we reach the following problem.
\begin{equation}\label{optim2}
\begin{aligned}
    &\mathcal{P}_2\!:\,
    \underset{{\{\ell_i^{(k,m)}\}}}{\text{maximize}}~~ \sum_{m=1}^{M} w_m \sum_{k \in \mathcal{K}(m)} p_k^{(m)}\eta_k^{(m)} \mathbb{E}[Q^{(k,m)}]\\
    &\text{subject to} \begin{cases} 
    \mathcal{C}_1: \sum_{i\in \mathcal{I}_{l_k^{(m)}}} 2^{-\ell_i^{(k,m)}} \leq 1, \forall k, m\\
    \hat{\mathcal{C}}_2: \ell_i^{(k,m)} \in \mathbb{R}^+, \forall i, k, m.
    \end{cases}
\end{aligned}
\end{equation}
where constraint $\hat{\mathcal{C}}_2$ is a relaxation of $\mathcal{C}_2$ in \eqref{optim1} to allow for non-negative real-valued codeword lengths. The solution of $\mathcal{P}_2$ gives real-valued codeword lengths, whose corresponding integer values can be found using a rounded-off operation.

\subsection{Semantics-Aware Encoding Design}\label{Section5b}
To solve the problem $\mathcal{P}_2$, in what follows, we find $\mathbb{E}[Q^{(k,m)}]$ based on different forms of $f_k(\Delta_k^{(m)}(t))$ defined in \eqref{sec5:eq3}.

\subsubsection{EUT case}
Considering the exponential utility function of timeliness, we propose the following lemma.
\begin{lemma}\label{lemm2}
The expected form of polygons $Q_{j}^{(k,m)}$ for the EUT case is approximately derived as
\begin{align}\label{sec5:eq8}
    \mathbb{E}[Q^{(k,m)}] &\approx 
    \Big( 1\!+\!\beta_k^{(m)} \!-\! \bar{\rho}_k^{(m)}\gamma_k^{(m)} \!-\! \mathbb{E}\!\left[\rho^{(k,m)}\varphi^{(k,m)}L^{(k,m)}\right]\! \Big)\mathbb{E}\!\left[\varphi^{(k,m)}L^{(k,m)}\right]
    - \dfrac{1}{2} \bar{\rho}_k^{(m)}\mathbb{E}\!\left[\varphi^{(k,m)}(L^{(k,m)})^2\right] \nonumber \\
    &~~~
   - \gamma_k^{(m)} \mathbb{E}\!\left[\rho^{(k,m)}\varphi^{(k,m)}L^{(k,m)}\right] - \bar{\rho}_k^{(m)}(\gamma_k^{(m)})^2 + (1\!+\!\beta_k^{(m)})\gamma_k^{(m)}
\end{align}
where $\gamma_k^{(m)} \coloneqq 1/(\lambda_k \psi_k^{(m)} q_{l_k^{(m)}})$ and $\bar{\rho}_k^{(m)} \coloneqq \mathbb{E}[\rho^{(k,m)}] = \displaystyle  \underset{T\rightarrow\infty}{\lim} \dfrac{1}{T} \sum_{j=1}^{\mathcal{N}_k^{(m)}(T)} \rho_j^{(k,m)} = \sum_{i\in \mathcal{I}_{l_k^{(m)}}} p_i^{(k,m)} \rho_i^{(k,m)}$.
\end{lemma}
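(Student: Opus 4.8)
The plan is to compute $\mathbb{E}[Q^{(k,m)}]$ directly from the geometry of the area under the curve $f_k(\Delta_k^{(m)}(t))$ for the exponential utility form in \eqref{sec5:eq3}. First I would isolate a single inter-arrival polygon $Q_j^{(k,m)}$: between the delivery of the $j$-th packet at $t_{j-1}^{(k,m)}+S_j^{(k,m)}$ and that of the $(j{+}1)$-th at $t_j^{(k,m)}+S_{j+1}^{(k,m)}$, the instantaneous AoI $\Delta_k^{(m)}(t)$ grows linearly, so $\mathcal{S}_k^{(m)}(t) = \exp(-\rho_j^{(k,m)}\Delta_k^{(m)}(t))+\beta_k^{(m)}$ with $\Delta_k^{(m)}$ running from $S_j^{(k,m)}$ (the age at delivery) up to $S_j^{(k,m)}+Y_j^{(k,m)}$, using $Y_j^{(k,m)}=W_j^{(k,m)}+S_{j+1}^{(k,m)}$ for the inter-delivery spacing. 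Integrating the exponential over this linear ramp gives $Q_j^{(k,m)}$ in closed form as a difference of two exponentials divided by $\rho_j^{(k,m)}$, plus the $\beta_k^{(m)}Y_j^{(k,m)}$ rectangle; the residual term $Q_\infty^{(k,m)}$ collects the tail contribution of the last interval as $T\to\infty$ and is $O(1)$, hence washes out after dividing by $T$ except through the $\eta_k^{(m)}\mathbb{E}[Q^{(k,m)}]$ bookkeeping already fixed in \eqref{sec5:eq5}.

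The second step is the approximation that produces the polynomial-in-$L$ expression in \eqref{sec5:eq8}. Exact evaluation would leave $e^{-\rho_j^{(k,m)} S_j^{(k,m)}}$ and $e^{-\rho_j^{(k,m)}(S_j^{(k,m)}+Y_j^{(k,m)})}$ terms; I would Taylor-expand the exponentials to second order in their arguments (this is the ``$\approx$'' in the lemma, justified because $\rho^{(k,m)}$, the service times, and waiting times are small on the relevant scale, or equivalently one keeps the leading timeliness-degradation terms), turning $Q_j^{(k,m)}$ into a quadratic form in $S_j^{(k,m)}$, $S_{j+1}^{(k,m)}$, and $W_j^{(k,m)}$ with coefficients involving $\rho_j^{(k,m)}$ and $\beta_k^{(m)}$. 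Because $S^{(k,m)} = \varphi^{(k,m)} L^{(k,m)}$ (the service-time identity from Section~\ref{Section4b}, so that $c$-th moments of $S$ become moments of $\varphi^{(k,m)}(L^{(k,m)})^c$), every service-time power converts to the $\mathbb{E}[\varphi^{(k,m)}L^{(k,m)}]$ and $\mathbb{E}[\varphi^{(k,m)}(L^{(k,m)})^2]$ moments appearing in \eqref{sec5:eq8}.

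The third step handles the waiting-time and attenuation-factor averages. The waiting time $W_j^{(k,m)}=\sum_{a=1}^{A_j^{(k,m)}} Z_a^{(k)}$ is a geometric sum of exponentials, hence exponential with the admitted-arrival rate $\lambda_k\psi_k^{(m)}q_{l_k^{(m)}}$, giving $\mathbb{E}[W^{(k,m)}]=\gamma_k^{(m)}$ and $\mathbb{E}[(W^{(k,m)})^2]=2(\gamma_k^{(m)})^2$; these supply the $\gamma_k^{(m)}$ and $(\gamma_k^{(m)})^2$ terms. For the attenuation factor I would use that $\rho_j^{(k,m)}$ depends only on the realization index through \eqref{sec5:eq4}, so averaging over admitted arrivals with the truncated pmf $p_i^{(k,m)}$ yields $\bar\rho_k^{(m)}=\sum_{i\in\mathcal{I}_{l_k^{(m)}}} p_i^{(k,m)}\rho_i^{(k,m)}$ as defined, and cross terms between $\rho_j^{(k,m)}$ and $S_j^{(k,m)}$ give the mixed moment $\mathbb{E}[\rho^{(k,m)}\varphi^{(k,m)}L^{(k,m)}]$. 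The main obstacle I anticipate is the independence bookkeeping: to factor the expectation of products like $\rho_j^{(k,m)} e^{-\rho_j^{(k,m)}\Delta}$ one needs the attenuation factor of the $j$-th arrival to be independent of its own service time and of the waiting time before it, and one must be careful that $W_j^{(k,m)}$ involves the rate $\lambda_k\psi_k^{(m)}q_{l_k^{(m)}}$ of \emph{admitted} arrivals (using the drop factor $\psi_k^{(m)}$) rather than the raw Poisson rate — getting the right effective rate into $\gamma_k^{(m)}$, and correctly truncating the second-order expansion so that the stated terms are exactly those retained, is where the delicate part of the argument lies; everything else is substitution and collecting terms.
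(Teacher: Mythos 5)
Your proposal is correct and follows essentially the same route as the paper's proof: decompose the area under $f_k(\Delta_k^{(m)}(t))$ into per-update polygons, apply a second-order Taylor expansion of the exponential, substitute $Y^{(k,m)}=W^{(k,m)}+S^{(k,m)}$ with $\mathbb{E}[(S^{(k,m)})^c]=\mathbb{E}[\varphi^{(k,m)}(L^{(k,m)})^c]$, use the exponential waiting time (mean $\gamma_k^{(m)}$, second moment $2(\gamma_k^{(m)})^2$) independent of $(\rho^{(k,m)},S^{(k,m)})$, and retain the $\rho$--$S$ correlation as the mixed moment $\mathbb{E}[\rho^{(k,m)}\varphi^{(k,m)}L^{(k,m)}]$. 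Your slight regrouping of the polygons (one attenuation factor per interval, between consecutive decoding instants, rather than the paper's two-piece polygons over $[t_{j-1}^{(k,m)},t_{j}^{(k,m)}]$ with $\rho_{j-1}^{(k,m)}$ and $\rho_j^{(k,m)}$) yields the same expectation \eqref{sec5:eq8} after collecting terms.
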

\begin{proof} See Appendix~\ref{lemm:app2}.
\end{proof}

Importing \eqref{sec5:eq8} into \eqref{optim2}, $\mathcal{P}_2$ becomes a convex problem and can be solved using standard solvers, e.g., MOSEK in CVX. Alternatively, we propose a heuristic solution to compute closed-form expressions for the codeword length, as follows.
\begin{proposition}\label{prop1}
The codeword lengths $\ell_i^{(k,m)}$, $\forall i\in \mathcal{I}_{l_k^{(m)}}$, which maximize \eqref{optim2} in $\mathcal{P}_2$ for the EUT case are computed as 
\begin{align}\label{sec5:eq9}
    \ell_i^{(k,m)} =-\log_2\!\left(\! \dfrac{\bar{\rho}_k^{(m)} p_i^{(k,m)} \varphi_i^{(k,m)}}{\mu_k^{(m)} (\ln(2))^2} \mathcal{W}_0\!\left(\! \dfrac{\mu_k^{(m)} (\ln(2))^2}{\bar{\rho}_k^{(m)} p_i^{(k,m)} \varphi_i^{(k,m)}} 2^{{\xi_k^{(m)}}}\!\right) \!\right)\!
\end{align}
where $\mu_k^{(m)}\geq0$ is a constant multiplier,  
\begin{align}\label{sec5:eq10}
    \xi_k^{(m)} \coloneqq \dfrac{2\chi_k^{(m)}\mu_k^{(m)}\ln(2) + \bar{\rho}_k^{(m)}\gamma_k^{(m)}\big(1\!+\!\chi_k^{(m)}\big) - \big(1\!+\!\beta_k^{(m)}\big)}{\bar{\rho}_k^{(m)}\big(1 \!+\! 2 \chi_k^{(m)}\bar{\varphi}_k^{(m)}\big)},
\end{align}
$\bar{\varphi}_k^{(m)} \coloneqq \mathbb{E}[\varphi^{(k,m)}] = \underset{T\rightarrow\infty}{\lim} \dfrac{1}{T} \sum_{j=1}^{\mathcal{N}_k^{(m)}(T)} \varphi_j^{(k,m)}$, and
\begin{equation}\label{sec5:eq11}
   \chi_k^{(m)} =  \dfrac{\mathbb{E}\!\left[\rho^{(k,m)}\varphi^{(k,m)}L^{(k,m)}\right]}{\bar{\rho}_k^{(m)} \mathbb{E}\!\left[\varphi^{(k,m)}L^{(k,m)}\right]}.
\end{equation}
\end{proposition}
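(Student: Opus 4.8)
The plan is to insert the polygon expectation from Lemma~\ref{lemm2} into $\mathcal{P}_2$, exploit the fact that the resulting program decouples across communicating pairs $(k,m)$, and then solve the per-layer stationarity (Karush--Kuhn--Tucker) condition in closed form through the Lambert $\mathcal{W}$ function. First I would substitute \eqref{sec5:eq8} into the objective of $\mathcal{P}_2$ and note that, for each $m$ and each $k\in\mathcal{K}(m)$, the codewords $\{\ell_i^{(k,m)}\}_{i\in\mathcal{I}_{l_k^{(m)}}}$ appear only in the summand $w_m p_k^{(m)}\eta_k^{(m)}\mathbb{E}[Q^{(k,m)}]$ and in the single Kraft constraint $\mathcal{C}_1$ attached to that pair; hence the maximization splits into one independent subproblem per pair. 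Fixing $(k,m)$, I would expand the expectations as explicit sums, namely $\mathbb{E}[\varphi^{(k,m)}L^{(k,m)}]=\sum_i p_i^{(k,m)}\varphi_i^{(k,m)}\ell_i^{(k,m)}$, $\mathbb{E}[\varphi^{(k,m)}(L^{(k,m)})^2]=\sum_i p_i^{(k,m)}\varphi_i^{(k,m)}(\ell_i^{(k,m)})^2$, and $\mathbb{E}[\rho^{(k,m)}\varphi^{(k,m)}L^{(k,m)}]=\sum_i p_i^{(k,m)}\rho_i^{(k,m)}\varphi_i^{(k,m)}\ell_i^{(k,m)}$, so that the layer objective becomes an explicit function of $\{\ell_i^{(k,m)}\}$ that is quadratic in the lengths plus a bilinear product of two aggregate linear forms.

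Next I would form the Lagrangian $\mathcal{L}=\mathbb{E}[Q^{(k,m)}]-\mu_k^{(m)}\bigl(\sum_i 2^{-\ell_i^{(k,m)}}-1\bigr)$ with multiplier $\mu_k^{(m)}\ge 0$, the common positive prefactor $w_m p_k^{(m)}\eta_k^{(m)}$ being absorbed into $\mu_k^{(m)}$, and impose $\partial\mathcal{L}/\partial\ell_i^{(k,m)}=0$. Differentiating \eqref{sec5:eq8}, the second-moment term $-\tfrac12\bar\rho_k^{(m)}\mathbb{E}[\varphi^{(k,m)}(L^{(k,m)})^2]$ produces $-\bar\rho_k^{(m)}p_i^{(k,m)}\varphi_i^{(k,m)}\ell_i^{(k,m)}$, the bilinear term $-\mathbb{E}[\rho^{(k,m)}\varphi^{(k,m)}L^{(k,m)}]\mathbb{E}[\varphi^{(k,m)}L^{(k,m)}]$ together with $-\gamma_k^{(m)}\mathbb{E}[\rho^{(k,m)}\varphi^{(k,m)}L^{(k,m)}]$ and $(1\!+\!\beta_k^{(m)}\!-\!\bar\rho_k^{(m)}\gamma_k^{(m)})\mathbb{E}[\varphi^{(k,m)}L^{(k,m)}]$ produce the remaining terms, and the Kraft term produces $\mu_k^{(m)}\ln(2)\,2^{-\ell_i^{(k,m)}}$. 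Since the bilinear product couples every $\ell_i^{(k,m)}$ through the two aggregate sums, I would break the coupling heuristically: use $\mathbb{E}[\rho^{(k,m)}\varphi^{(k,m)}L^{(k,m)}]=\chi_k^{(m)}\bar\rho_k^{(m)}\mathbb{E}[\varphi^{(k,m)}L^{(k,m)}]$ from the definition \eqref{sec5:eq11} of $\chi_k^{(m)}$, and replace the per-symbol attenuation factor $\rho_i^{(k,m)}$ wherever it multiplies an aggregate sum by the effective value $\chi_k^{(m)}\bar\rho_k^{(m)}$. The stationarity condition then collapses to a scalar equation $a_i\ell_i^{(k,m)}+b_i=c\,2^{-\ell_i^{(k,m)}}$ with $a_i=\bar\rho_k^{(m)}p_i^{(k,m)}\varphi_i^{(k,m)}$, $c=\mu_k^{(m)}\ln(2)$, and the ratio $b_i/a_i$ the same for every $i$. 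Summing this equation over $i$ and using the active Kraft constraint $\sum_i 2^{-\ell_i^{(k,m)}}=1$ together with $\sum_i p_i^{(k,m)}\varphi_i^{(k,m)}=\bar\varphi_k^{(m)}$ yields a scalar relation for $\mathbb{E}[\varphi^{(k,m)}L^{(k,m)}]$ in terms of $\mu_k^{(m)},\chi_k^{(m)},\bar\varphi_k^{(m)},\gamma_k^{(m)},\beta_k^{(m)},\bar\rho_k^{(m)}$; substituting it back and simplifying shows $b_i/a_i=\xi_k^{(m)}$, with $\xi_k^{(m)}$ exactly as in \eqref{sec5:eq10}.

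Finally I would solve $a_i\ell_i^{(k,m)}+a_i\xi_k^{(m)}=\mu_k^{(m)}\ln(2)\,2^{-\ell_i^{(k,m)}}$ explicitly: the change of variable $y=2^{-\ell_i^{(k,m)}}$ turns it into $\ln y=\xi_k^{(m)}\ln 2-(\mu_k^{(m)}(\ln 2)^2/a_i)\,y$, and the further substitution $u=(\mu_k^{(m)}(\ln 2)^2/a_i)\,y$ yields $u\,e^u=(\mu_k^{(m)}(\ln 2)^2/a_i)\,2^{\xi_k^{(m)}}$, whose principal-branch solution is $u=\mathcal{W}_0(\cdot)$; unwinding $y=u\,a_i/(\mu_k^{(m)}(\ln 2)^2)$ and $\ell_i^{(k,m)}=-\log_2 y$, with $a_i=\bar\rho_k^{(m)}p_i^{(k,m)}\varphi_i^{(k,m)}$, reproduces \eqref{sec5:eq9}. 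The constant $\mu_k^{(m)}$ is then pinned down by re-imposing $\mathcal{C}_1$ with equality on \eqref{sec5:eq9}, and $\chi_k^{(m)}$ is the self-consistent fixed point of \eqref{sec5:eq11} evaluated at \eqref{sec5:eq9} — which is precisely the iteration the accompanying algorithm carries out.

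I expect the main obstacle to be the non-separability introduced by the bilinear product $\mathbb{E}[\rho^{(k,m)}\varphi^{(k,m)}L^{(k,m)}]\mathbb{E}[\varphi^{(k,m)}L^{(k,m)}]$ (and, more mildly, the second-moment term), which admits no closed-form optimizer: freezing the ratio $\chi_k^{(m)}$ and the mean $\bar\varphi_k^{(m)}$ is exactly what turns the solution into a heuristic rather than an exact optimum, so one must argue separately that the reduced (frozen-$\chi$) objective is concave in $\{\ell_i^{(k,m)}\}$ and hence that the stationary point is a maximizer, and that the Lambert argument $\mu_k^{(m)}(\ln 2)^2\,2^{\xi_k^{(m)}}/(\bar\rho_k^{(m)}p_i^{(k,m)}\varphi_i^{(k,m)})$ stays in $[-1/e,\infty)$ so that $\mathcal{W}_0$ returns a real, positive codeword length.
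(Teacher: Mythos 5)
Your proposal is correct and follows essentially the same route as the paper's proof: substitute Lemma~\ref{lemm2} into $\mathcal{P}_2$, freeze the coupling via $\chi_k^{(m)}$ as in \eqref{sec5:eq11}, form the per-pair Lagrangian with the Kraft constraint, take the KKT stationarity condition with the Kraft inequality active ($\mu_k^{(m)}\neq 0$), and solve the resulting $x\,e^{x}=y$ equation with the principal branch $\mathcal{W}_0$. Your explicit step of summing the stationarity condition over $i$ and using $\sum_{i}2^{-\ell_i^{(k,m)}}=1$ and $\sum_i p_i^{(k,m)}\varphi_i^{(k,m)}=\bar{\varphi}_k^{(m)}$ to eliminate $\mathbb{E}\!\left[\varphi^{(k,m)}L^{(k,m)}\right]$ and obtain \eqref{sec5:eq10} is exactly the "algebraic manipulation" the paper leaves implicit, and your closing remarks on concavity of the frozen-$\chi$ objective and nonnegativity of the Lambert argument are consistent with (indeed slightly more careful than) the paper's treatment.
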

\begin{proof} See Appendix~\ref{app1}.
\end{proof}

The optimal codeword lengths\footnote{The optimal codeword lengths refer to \emph{real-valued} optimal codeword lengths throughout the paper.} of delivered packets from the $k$-th SSM to the $m$-th MM can be found via \eqref{sec5:eq9} for $(\mu_k^{(m)}, \chi_k^{(m)})$ pair, subject to $\mu_k^{(m)}>0$ and $\sum_{i\in \mathcal{I}_{l_k^{(m)}}} 2^{-\ell_i^{(k,m)}}=1$. The values of $\chi_k^{(m)}$ and $\mu_k^{(m)}$ can be obtained using Algorithm~\ref{Alg:Alg.1}, through its \emph{inner} and \emph{outer} loops, respectively. We first assume similar importance, hence attenuation factors, for all packets and initialize $\chi_k^{(m)} = 1$. Then, given a small value of $\mu_k^{(m)}$, we compute $\ell_i^{(k,m)}$ and new $\chi_k^{(m)}$ through the inner loop. Thereafter, based on the found $\chi_k^{(m)}$, we derive new values for $\ell_i^{(k,m)}$ in the next iteration. This process continues until we reach the stopping accuracy $\varepsilon$. Once reached, the outer loop checks the Kraft-McMillan condition and resets $\mu_k^{(m)}$ if the condition is not satisfied. Subsequently, the inner loop starts again, considering the new value of the constant multiplier. Finally, the algorithm converges to the final amounts of $\chi_k^{(m)}$ and $\mu_k^{(m)}$ with the rate of $\mathcal{O}((N_aN_b)^{-1})$ in which $N_a$ and $N_b$ denote the maximum numbers of outer and inner iterations, respectively.
\begin{algorithm}[t!]
\DontPrintSemicolon
    \caption{Solution for deriving $\mu_k^{(m)}$ and $\chi_k^{(m)}$} \label{Alg:Alg.1}
    \KwInput{Fixed parameters $\mathcal{I}_{l_k^{(m)}}$, $p_i^{(k,m)}$, $\rho_i^{(k,m)}$, and $\varphi_i^{(k,m)}$, $\forall i \in \mathcal{I}_{l_k^{(m)}}$, and $\beta_k^{(m)}$.
    Stopping accuracy $\varepsilon$. Initial parameters $(\mu_k^{(m)})^{(0)}$, $(\chi_k^{(m)})^{(0)}$, $(\xi_k^{(m)})^{(0)}$, and $(\ell_i^{(k,m)})^{(0)}$, $\forall i \in \mathcal{I}_{l_k^{(m)}}$.
    }
    \KwOutput{Computed parameters  $\chi_k^{(m)}=(\chi_k^{(m)})^{(b)}$, $\xi_k^{(m)} = (\xi_k^{(m)})^{(b)}$, $\ell_i^{(k,m)}=(\ell_i^{(k,m)})^{(b)}$, $\forall i \in \mathcal{I}_{l_k^{(m)}}$, and $\mu_k^{(m)}=(\mu_k^{(m)})^{(a)}$.}
    \textit{Iteration} $a$: \Comment{Outer loop}\\
    \textit{Iteration} $b$: \Comment{Inner loop}\\
     Compute $(\xi_k^{(m)})^{(b)}$ and $(\ell_i^{(k,m)})^{(b)}$,$\forall i \in \mathcal{I}_{l_k^{(m)}}$, using \eqref{sec5:eq10} and \eqref{sec5:eq9}, respectively.\\
     Calculate $\mathbb{E}\!\left[\rho^{(k,m)}\varphi^{(k,m)}L^{(k,m)}\right]$ and $\bar{\rho}_k^{(m)} \mathbb{E}\!\left[\varphi^{(k,m)}L^{(k,m)}\right]$.\\
     Update $(\chi_k^{(m)})^{(b)}$ from \eqref{sec5:eq11} based on {\scriptsize{\textbf{4}}}.\\
     \lIf{Criterion $\big|(\chi_k^{(m)})^{(b)} \!-\! (\chi_k^{(m)})^{(b-1)}\big|\!>\! \varepsilon$}{set $b=b+1$, and \textbf{goto} {\scriptsize{\textbf{2}}}.}
     Update $(\xi_k^{(m)})^{(b)}$ and $(\ell_i^{(k,m)})^{(b)}$ according to {\scriptsize{\textbf{5}}}.\\
     \lIf{$\sum_{i\in \mathcal{I}_{l_k^{(m)}}}2^{-(\ell_i^{(k,m)})^{(b)}}=1$}{stop the process, and \textbf{goto} {\scriptsize{\textbf{11}}}.}
     \lElseIf{$\sum_{i\in \mathcal{I}_{l_k^{(m)}}}2^{-(\ell_i^{(k,m)})^{(b)}}<1$}{decrease $(\mu_k^{(m)})^{(a)}$, set $a=a+1$, and \textbf{goto} {\scriptsize{\textbf{1}}}.}
     \lElse{increase $(\mu_k^{(m)})^{(a)}$, set $a=a+1$, and \textbf{goto} {\scriptsize{\textbf{1}}}.}
     Save $(\chi_k^{(m)})^{(b)}$, $(\xi_k^{(m)})^{(b)}$, $(\ell_i^{(k,m)})^{(b)}$, $\forall i \in \mathcal{I}_{l_k^{(m)}}$, and $(\mu_k^{(m)})^{(a)}$.
\end{algorithm}

\subsubsection{LUT case} Under a logarithmic utility function of timeliness, $\mathbb{E}[Q^{(k,m)}]$ is found as follows.
\begin{lemma}\label{lemm3}
The expected value of $Q_{j}^{(k,m)}$ for the LUT case is approximately derived as
\begin{align}\label{sec5:eq13} 
    \mathbb{E}[Q^{(k,m)}] &\approx 
    - 2\mathbb{E}\!\left[\rho^{(k,m)}\varphi^{(k,m)}L^{(k,m)}\right]\mathbb{E}\!\left[\varphi^{(k,m)}L^{(k,m)}\right] -  \bar{\rho}_k^{(m)}\mathbb{E}\!\left[\varphi^{(k,m)}(L^{(k,m)})^2\right] \nonumber \\ &
   ~~~ - 2\gamma_k^{(m)} \mathbb{E}\!\left[\rho^{(k,m)}\varphi^{(k,m)}L^{(k,m)}\right] + \Big( \beta_k^{(m)} \!-\! 1 \!-\! 2\bar{\rho}_k^{(m)}\gamma_k^{(m)}\Big)\mathbb{E}\!\left[\varphi^{(k,m)}L^{(k,m)}\right] \nonumber \\ 
   &~~~ - 2\bar{\rho}_k^{(m)}(\gamma_k^{(m)})^2 + (\beta_k^{(m)} \!-\! 1)\gamma_k^{(m)}.
\end{align}
\end{lemma}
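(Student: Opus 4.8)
\textbf{Proof proposal for Lemma~\ref{lemm3} (LUT case).}

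The plan is to mirror the proof of Lemma~\ref{lemm2} (the EUT case in Appendix~\ref{lemm:app2}), adapting each step to the logarithmic form $f_k(\Delta_k^{(m)}(t)) = \ln(-\rho_k^{(m)}(t)\Delta_k^{(m)}(t)) + \beta_k^{(m)}$ from \eqref{sec5:eq3}. First I would set up the geometry exactly as in \eqref{sec5:eq5}: partition the area under $f_k(\cdot)$ over $(0,T)$ into the polygons $Q_j^{(k,m)}$ between consecutive successful receptions and the residual $Q_\infty^{(k,m)}$, so that $\bar{\mathcal{S}}_k^{(m)} = \eta_k^{(m)}\mathbb{E}[Q^{(k,m)}]$. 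Over the $j$-th update interval the AoI $\Delta_k^{(m)}(t)$ grows linearly from the value $S_j^{(k,m)}$ (the service time of the just-delivered packet) to $Y_j^{(k,m)} + S_{j+1}^{(k,m)}$, with slope one and with the attenuation factor held fixed at $\rho_j^{(k,m)}$. Thus $Q_j^{(k,m)}$ is the integral of $\ln(\rho_j^{(k,m)}\Delta) + \beta_k^{(m)}$ over that linear segment.

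Second, I would compute that elementary integral in closed form using $\int \ln(a x)\,dx = x\ln(ax) - x$, evaluated between the endpoints, and then take expectations over the randomness: $Y_j^{(k,m)} = W_j^{(k,m)} + S_j^{(k,m)}$ with $W_j^{(k,m)}$ the Erlang-type waiting time having mean $\gamma_k^{(m)} = 1/(\lambda_k\psi_k^{(m)}q_{l_k^{(m)}})$, and $S_j^{(k,m)} = \varphi_j^{(k,m)}L_j^{(k,m)}$. The logarithmic terms are the genuinely new feature: exact expectations of $\ln(\cdot)$ against the waiting/service distributions do not close, so I would Taylor-expand $\ln(1+x)$ (equivalently, expand $\ln$ around the mean argument) to second order, keeping terms up to the relevant moments $\mathbb{E}[\varphi^{(k,m)}L^{(k,m)}]$, $\mathbb{E}[\varphi^{(k,m)}(L^{(k,m)})^2]$, and the cross-moment $\mathbb{E}[\rho^{(k,m)}\varphi^{(k,m)}L^{(k,m)}]$ — this is precisely the same approximation order flagged by the ``$\approx$'' in Lemma~\ref{lemm2}. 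Collecting the surviving terms, together with $\bar\rho_k^{(m)} = \sum_{i}p_i^{(k,m)}\rho_i^{(k,m)}$ and $\mathbb{E}[W^2] $ for the Erlang variable (which produces the $(\gamma_k^{(m)})^2$ contributions), should reproduce \eqref{sec5:eq13}, including the $\beta_k^{(m)}-1$ groupings that arise because $\int\ln = x\ln x - x$ contributes the stray $-1$.

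Third, I would separately handle $Q_\infty^{(k,m)}$ — the contribution from the final incomplete interval as $T\to\infty$ — and verify it is $o(T)$ so it does not affect $\mathbb{E}[Q^{(k,m)}]$ after dividing by $T$, again following the EUT template. The main obstacle I anticipate is controlling the logarithmic nonlinearity: unlike the exponential case where $e^{-\rho\Delta}$ integrates cleanly and one only needs a series expansion of the exponential at the end, here the $\ln$ must be linearized before integration can be pushed through the expectations, and one must be careful that the expansion point (the mean of $\rho_j^{(k,m)}\Delta$) stays bounded away from zero so the expansion is valid — this is where the convexity/non-increasing assumptions on $f_k$ and the bounds $\rho_{\min},\rho_{\max},\ell_{\max}$ do the heavy lifting. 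The bookkeeping of which cross-moments survive at second order, and matching signs and coefficients to \eqref{sec5:eq13}, is routine but error-prone.
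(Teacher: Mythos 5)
Your overall route is essentially the paper's own (Appendix~\ref{lemm:app3}, mirroring Appendix~\ref{lemm:app2}): decompose the area under $f_k$ into per-update polygons, approximate the logarithm by a low-order expansion, and take expectations using $Y^{(k,m)}=W^{(k,m)}+S^{(k,m)}$, the independence of $\rho_j^{(k,m)}$ and $W_j^{(k,m)}$, $\mathbb{E}[W^{(k,m)}]=\gamma_k^{(m)}$, $\mathbb{E}[(W^{(k,m)})^2]=2(\gamma_k^{(m)})^2$, and the service moments $\mathbb{E}[(S^{(k,m)})^c]=\mathbb{E}[\varphi^{(k,m)}(L^{(k,m)})^c]$ from \eqref{sec2:eq2}. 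Your bookkeeping is slightly different: you take one polygon per attenuation factor ($\Delta$ running from $S_j^{(k,m)}$ to $Y_j^{(k,m)}+S_{j+1}^{(k,m)}$ under a single $\rho_j^{(k,m)}$), whereas the paper's $Q_j^{(k,m)}$ in \eqref{sec5:eq12} mixes the tail of the $(j-1)$-th curve over $S_j^{(k,m)}$ with the start of the $j$-th curve. In steady state the two splits have the same expectation (the cross terms $\rho_{j-1}Y_{j-1}S_j$ versus $\rho_j Y_j S_{j+1}$ decouple identically by independence across intervals), so this difference is harmless. Note, though, that the waiting time here is exponential, not Erlang (the Erlang variable appears only in Lemma~\ref{lemm1}); its second moment $2(\gamma_k^{(m)})^2$ is exactly what produces the $-2\bar{\rho}_k^{(m)}(\gamma_k^{(m)})^2$ term, so an Erlang of higher order would give the wrong coefficient.

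The substantive gap is in how you handle the logarithm. The paper expands the log \emph{integrand} to an affine function of $\rho\Delta$ with fixed coefficients before integrating, and it is this specific linearization that yields \eqref{sec5:eq12}, i.e., every $\rho$-term carrying twice its EUT counterpart's coefficient and the constant grouping $(\beta_k^{(m)}-1)$ instead of $(1+\beta_k^{(m)})$; these constants then propagate verbatim into \eqref{sec5:eq13}. Your plan---integrate $\ln(\rho u)+\beta$ exactly via $u\ln(\rho u)-u$ and then Taylor-expand ``around the mean argument''---does not pin these constants down: an expansion anchored at the (random-interval) mean produces coefficients that depend on that mean (factors of order $1/\mathbb{E}[\rho\Delta]$), not the fixed $-2$ and $-1$ of \eqref{sec5:eq13}, and your attribution of the ``$-1$'' to the antiderivative is a different mechanism from the paper's, where it comes out of the expansion of the log itself. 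So as written you would obtain an expression of the right shape (same moments appearing) but you cannot claim the stated coefficients without committing to the same linearization of the log that the paper uses; the last ``collect terms and match \eqref{sec5:eq13}'' step is precisely where the proof content lies, and it is left unexecuted.
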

\begin{proof} See Appendix~\ref{lemm:app3}.
\end{proof}

Inserting \eqref{sec5:eq13} into \eqref{optim2}, $\mathcal{P}_2$ is a convex problem, and the optimal codeword lengths $\ell_i^{(k,m)}$ can be obtained using either standard solvers or the following expression
\begin{align}\label{sec5:eq14}
    \ell_i^{(k,m)} =-\log_2\!\left(\! \dfrac{\bar{\rho}_k^{(m)} p_i^{(k,m)} \varphi_i^{(k,m)}}{\hat{\mu}_k^{(m)} (\ln(2))^2} \mathcal{W}_0\!\left(\! \dfrac{\hat{\mu}_k^{(m)} (\ln(2))^2}{\bar{\rho}_k^{(m)} p_i^{(k,m)} \varphi_i^{(k,m)}} 2^{{\hat{\xi}_k^{(m)}}}\!\right) \!\right)\!,
\end{align}
which is derived with the same method as in \eqref{sec5:eq9}. Herein, $\hat{\mu}_k^{(m)}\geq0$ indicates a constant. We also have $\hat{\xi}_k^{(m)} \coloneqq \dfrac{4\hat{\chi}_k^{(m)}\hat{\mu}_k^{(m)}\ln(2) + 2\bar{\rho}_k^{(m)}\gamma_k^{(m)}\big(1\!+\!\hat{\chi}_k^{(m)}\big) - \big(\beta_k^{(m)} \!-\!1\big)}{2\bar{\rho}_k^{(m)}\big(1 \!+\! 2 \hat{\chi}_k^{(m)}\bar{\varphi}_k^{(m)}\big)}$.
The optimal codeword lengths are computed in \eqref{sec5:eq14} based on the known pair of $(\hat{\mu}_k^{(m)}, \hat{\chi}_k^{(m)})$, subject to $\hat{\mu}_k^{(m)}>0$ and $\sum_{i\in \mathcal{I}_{l_k^{(m)}}} 2^{-\ell_i^{(k,m)}}=1$. Similar to the EUT case, we can find the values of $\hat{\mu}_k^{(m)}$ and $\hat{\chi}_k^{(m)}$ using Algorithm~\ref{Alg:Alg.1} replacing its parameters with those for the LUT case.

\subsubsection{RUT case} Following the same steps as for deriving \eqref{sec5:eq8} and \eqref{sec5:eq13}, the expected form of polygons $Q_{j}^{(k,m)}$ for the RUT case is obtained as
\begin{align}\label{sec5:eq16}
    \mathbb{E}[Q^{(k,m)}] &\approx 
    - \kappa \mathbb{E}\!\left[\rho^{(k,m)}\varphi^{(k,m)}L^{(k,m)}\right]\mathbb{E}\!\left[\varphi^{(k,m)}L^{(k,m)}\right] - \frac{\kappa}{2} \mathbb{E}\!\left[\rho^{(k,m)}\varphi^{(k,m)}(L^{(k,m)})^2\right] \nonumber \\
    &~~~ - 2 \kappa \gamma_k^{(m)} \mathbb{E}\!\left[\rho^{(k,m)}\varphi^{(k,m)}L^{(k,m)}\right] + \Big( \frac{\kappa}{\kappa \!+\!1}\!+\!\beta_k^{(m)} \!-\! \kappa \bar{\rho}_k^{(m)}\gamma_k^{(m)}\Big)\mathbb{E}\!\left[\varphi^{(k,m)}L^{(k,m)}\right] \nonumber \\
    &~~~ -2 \kappa \bar{\rho}_k^{(m)}(\gamma_k^{(m)})^2 + (\frac{\kappa}{\kappa \!+\!1}\!+\!\beta_k^{(m)})\gamma_k^{(m)}.
\end{align}
Consequently, the optimal codeword lengths are given as
\begin{align}\label{sec5:eq17}
    \ell_i^{(k,m)} =-\log_2\!\left(\! \dfrac{\bar{\rho}_k^{(m)} p_i^{(k,m)} \varphi_i^{(k,m)}}{\doublehat{\mu}_k^{(m)} (\ln(2))^2} \mathcal{W}_0\!\left(\! \dfrac{\doublehat{\mu}_k^{(m)} (\ln(2))^2}{\bar{\rho}_k^{(m)} p_i^{(k,m)} \varphi_i^{(k,m)}} 2^{\hat{\hat{\xi}}_k^{(m)}}\!\right) \!\right)\!
\end{align}
where $\doublehat{\mu}_k^{(m)}\geq0$, and $  \doublehat{\xi}_k^{(m)} \coloneqq \dfrac{2\kappa\doublehat{\chi}_k^{(m)}\doublehat{\mu}_k^{(m)}\ln(2) + \kappa \bar{\rho}_k^{(m)}\gamma_k^{(m)}\big(1\!+\!2\doublehat{\chi}_k^{(m)}\big) - \big(\frac{\kappa}{\kappa\!+\!1}\!+\!\beta_k^{(m)}\big)}{\kappa \bar{\rho}_k^{(m)}\big(1 \!+\! 2 \doublehat{\chi}_k^{(m)}\bar{\varphi}_k^{(m)}\big)}$.
The values of $\doublehat{\mu}_k^{(m)}$ and $\doublehat{\chi}_k^{(m)}$ are calculated using Algorithm~\ref{Alg:Alg.1} with parameters for the RUT case.

\subsection{Asymptotic Expansions for Codeword Lengths}
We provide here two asymptotic expansions for the derived closed-form codeword lengths in \eqref{sec5:eq9}, \eqref{sec5:eq14}, and \eqref{sec5:eq17}, for the EUT, LUT, and RUT cases, respectively.

\subsubsection{Dependency on occurrence probability and transmission rounds}
As shown in Section~\ref{Section5b}, the codeword length assigned to a realization (update packet) depends - among others - on its probability of occurrence and the number of transmission rounds. Using
\begin{equation}
    \left[\frac{\mathcal{W}_0(y)}{y}\right]^z = \operatorname{exp}\!\big(\!-z \mathcal{W}_0(y)\big), \forall y\geq e,
\end{equation}
from the Laurent series of order $z$, and $\mathcal{W}_0(y) = \ln(y) - \ln(\ln(y)) + \mathcal{O}(1)$ from the second-order Taylor expansion for large enough $y$, we can write
\begin{equation}\label{ass:eq1}
    \ell_i^{(k,m)} \propto \dfrac{\underset{i }{\operatorname{max}}\Big\{p_i^{(k,m)}\varphi_i^{(k,m)}\Big\}}{p_i^{(k,m)}\varphi_i^{(k,m)}} - \ln\!\left(\dfrac{\underset{i }{\operatorname{max}}\Big\{p_i^{(k,m)}\varphi_i^{(k,m)}\Big\}}{p_i^{(k,m)}\varphi_i^{(k,m)}}\!\right)\!.
\end{equation}
From the asymptotic expression in \eqref{ass:eq1}, we deduce that the codeword length monotonically decreases (increases) by an increase in the occurrence probability, the number of transmissions, or both when $p_i^{(k,m)} \varphi_i^{(k,m)} \leq  1$ ($p_i^{(k,m)} \varphi_i^{(k,m)} >  1$). Besides, $p_i^{(k,m)} \varphi_i^{(k,m)} \!\rightarrow 0$ yields $\ell_i^{(k,m)}\!\rightarrow\ell_{\rm max}$, $\forall i \in \mathcal{I}_{l_k^{(m)}}$, where $\ell_{\rm max}$ denotes the upper bound for the size of a codeword length. 

\subsubsection{Uniform sources and equal semantic value}
For a uniform source, i.e., $p_i^{(k,m)}=p_k^{(m)}/n_k$, $\forall i$, with equal goal-oriented importance, e.g., $v_i^{(k,m)} = 1$, and the same number of transmissions, i.e., $\varphi_i^{(k,m)} = \varphi_{i^\prime}^{(k,m)}$, $\forall i \neq i^\prime$, we end up with codewords of equal size, i.e., $\ell_i^{(k,m)}=\ell_{i^\prime}^{(k,m)}$, where
\begin{equation}\label{ass:eq2}
    \ell_i^{(k,m)} \propto \dfrac{n_k}{p_k^{(m)}} - \ln\!\left(\dfrac{n_k}{p_k^{(m)}}\!\right)\!, \forall i \in \mathcal{I}_{l_k^{(m)}}.
\end{equation}
Thus, $n_k\gg1$ results in $\ell_i^{(k,m)}\!\rightarrow\ell_{\rm max}$, which remains almost fixed for very large $n_k$. Therefore, as the number of realizations \emph{increases}, the assigned codeword lengths become \emph{longer} to satisfy the Kraft-McMillan condition at the expense of \emph{higher} service time. On the other hand, \eqref{ass:eq2} indicates that the codeword lengths of packets transferred to a monitor with a \emph{higher} probability of observation from a serving sensor are \emph{shorter} than those with a lower probability of observation from the same sensor. That is, $\ell_i^{(k,m)}$ of packets sent to the $m$-th SSM from the $k$-th SSM become shorter with the increase of $p_k^{(m)}$.

\section{Simulation Results}\label{Section6}
In this section, we present simulation results that corroborate our analysis and show the performance gains by properly designing semantic filtering and source coding for timely status update delivery in DMSs.

\subsection{Setup and Assumptions}
We consider a DMS with $K=100$ randomly distributed SSMs and $M=16$ fixedly positioned MMs in an area of $10\times10\, [\text{km}^2]$ divided into sixteen identical subareas. 
The $k$-th SSM transmits update packets to its four closest MMs, i.e., $|\mathcal{M}(k)|=4$, $\forall k$, over Rayleigh block-fading channels. The $m$-th MM observes packets from its serving SSMs with the same probabilities, i.e., $p_k^{(m)} = \frac{1}{|\mathcal{K}(m)|}$, $\forall m, k\in \mathcal{I}_{l_k^{(m)}}$.\footnote{The rationale behind the uniform assumption is twofold; first for its simplicity, and second for its relevance in practical use cases in which sensor and monitor modules do not have accurate knowledge of other agents’ locations and activation patterns.} The information source follows a ${\rm Zipf}(n^{(k)},s)$ distribution with pmf $P_X(x_i^{(k)})= \frac{1/i^s}{\sum_{l=1}^{n^{(k)}} {1}/{{l}^s}}$ for $n_k = | \mathcal{X}^{(k)} |$ realizations at the $k$-th sensor. The parameter $s$ is an exponent characterizing $P_X(\cdot)$,varying from uniform distribution ($s=0$) to \emph{peaky} ones. 
\noindent
\renewcommand{\arraystretch}{1}
\begin{table}[!ht]
\begin{center}
\caption{Intrinsic and Extrinsic features with their proposed properties.}\label{tab:tab2}
\begin{tabular}{| l | l | l | l | l | }
\hline
\multicolumn{1}{|c|}{\rule{0pt}{9pt} \small \centering \!\!\text{Feature}}& \multicolumn{1}{c|}{\rule{0pt}{9pt} \small \centering \!\!\!\!\!$w_b^{(k,m)}$\!\!\!} & \multicolumn{1}{c|}{\rule{0pt}{9pt} \small \centering \!\!\!\!\!$U_n^{(k,m)}$\!\!\!} & \multicolumn{1}{c|}{\rule{0pt}{9pt} \centering \small \!\!\text{Limits}} & \multicolumn{1}{c|}{\rule{0pt}{9pt} \small \centering\!\!Critical points}\\
\hline
\hline
\footnotesize \!Probability & \footnotesize -- & \footnotesize -- & \footnotesize \!\!$(0, 1]$\!\! & \footnotesize \!\!$\{0\}$ \\
\hline
\footnotesize \!Usefulness & \footnotesize -- & \footnotesize -- & \footnotesize \!\!$[1, n_k]$\!\! & \footnotesize \!\!Ten random $x_i^{(k)}$ \\
\hline
\footnotesize \!Loss risk* & \footnotesize -- & \footnotesize -- & \footnotesize $\!\![0, 100]\%\!\!$ & \footnotesize \!\!$\{100\}\%$ \\
\hline
\footnotesize \!Average distance & \footnotesize\!\!$0.4$\!\! & \footnotesize\!\!$0.4$\!\! & \footnotesize \!\!$[3.66, 8.68)\,[\text{km}]$\!\! & \footnotesize \!\!$\{3.66\}\,[\text{km}]$ \\
\hline
\footnotesize \!Resolution & \footnotesize \!\!$0.2$\!\! & \footnotesize\!\!$0.6$\!\! &  \footnotesize \!\!$(0, R_{\rm max}]$\!\! & \footnotesize\!\!$\{R_{\rm max}\}$\!\!\\
\hline
\footnotesize \!Tolerance & \footnotesize \!\!$0.2$\!\! & \footnotesize\!\!$1$\!\! & \footnotesize$\!\![0, 100]\%\!\!$ & \footnotesize \!\!$\{100\}\%$\!\!\\
\hline
\footnotesize \!Circuit power & \footnotesize \!\!$0.2$\!\! & \footnotesize\!\!$0.63$\!\! & \footnotesize\!\!$[P_{\rm min}, \infty)$\!\! & \footnotesize \!\!$\{P_{\rm min}\}$\!\!\\
\hline
\footnotesize \!Battery state  & \footnotesize \!\!$0.3$\!\! & \footnotesize\!\!$1$\!\! &\footnotesize $\!\![0, 100]\%\!\!$ & \footnotesize  \!\!$\{0\}\%$ \\
\hline
\multicolumn{5}{|l|}{\footnotesize \!*The criticality for all critical points is equal to $1$.} \\
\hline
\end{tabular}
\medskip
\end{center}
\end{table}
\noindent
\renewcommand{\arraystretch}{1}
\begin{table*}[!t]
\begin{center}
\caption{Parameters for simulation results.}\label{tab:tab3}
\begin{tabular}{ | l | c | l || l | c | l | }
\hline
\multicolumn{1}{|c|}{\rule{0pt}{9pt} \small \centering \text{Parameter}} & \small \centering \text{Symbol} & \multicolumn{1}{c||}{\centering \small \!\text{Value}\!} & \multicolumn{1}{c|}{\rule{0pt}{9pt} \small \centering \text{Parameter}} & \small \centering \text{Symbol} & \multicolumn{1}{c|}{\centering \small \text{Value}}\\
\hline
\hline
\footnotesize \!Number of the SSMs & \footnotesize $K$ & \footnotesize \!\!$100$ & \footnotesize \!Minimum attenuation factors & \footnotesize $\rho_{\rm min}$ & \footnotesize \!\!$0.1$\\
\hline
\footnotesize \!Number of the MMs & \footnotesize $M$ & \footnotesize \!\!$16$ & \footnotesize \!Maximum attenuation factors & \footnotesize $\rho_{\rm max}$ & \footnotesize \!\!$5$\\
\hline
\footnotesize \!Size of realizations set & \footnotesize $n_k$, $\forall k$ & \footnotesize \!\!$100$ & \footnotesize \!Maximum packet drop times & \footnotesize $d_{\rm max}$ & \footnotesize \!\!$10$\\
\hline
\footnotesize \!Constant exponent for ${\rm Zipf}(.,.)$\!& \footnotesize $s$ & \footnotesize \!\!$0.4$ & \footnotesize \!Maximum transmission rounds\! & \footnotesize $r_{\rm max}$ & \footnotesize \!\!$3$ \cite{liu2004cross}\\
\hline
\footnotesize \!Update packet input rate & \footnotesize $\lambda_k$, $\forall k$ & \footnotesize \!\!$0.5$ & \footnotesize \!Reverse of channel coding rate\!& \footnotesize \!\!$c_j^{(k,m)}$, $\forall k,b,j$\!\! & \footnotesize \!\!$2$ \cite{liu2004cross}\\
\hline
\footnotesize \!Number of intrinsic features & \footnotesize $A$ & \footnotesize \!\!$3$ & \footnotesize \!Constant weight parameter & \footnotesize $w_m$, $\forall m$ & \footnotesize \!\!$1$\\
\hline
\footnotesize \!Number of extrinsic features & \footnotesize $B$ & \footnotesize \!\!$3$ & \footnotesize \!Upper bound of codeword lengths & \footnotesize $\ell_{\rm max}$ & \footnotesize \!\!$100$\\
\hline
\footnotesize \!Exponent for intrinsic features & \footnotesize \!\!$\alpha_a^{(k,m)}$, $\forall a,k,m$\!\! & \footnotesize \!\!$0.5$ & \footnotesize \!Monitoring time length of arrivals\! & \footnotesize $T$ & \footnotesize \!\!$1000\,[\text{sec}]$\\
\hline
\footnotesize \!Exponent for extrinsic features & \footnotesize \!\!$\bar{\alpha}_b^{(k,m)}$, $\forall b,k,m$\!\! & \footnotesize \!\!$1$ & \footnotesize \!Minimum circuit power & \footnotesize $P_{\rm min}$ & \footnotesize \!\!$0.1\, [\text{W}]$\\
\hline
\footnotesize \!Minimum importance & \footnotesize $\operatorname{VF}_{\rm min}$ & \footnotesize \!\!$0.1$ & \footnotesize \!Maximum sensing resolution & \footnotesize $R_{\rm max}$ & \footnotesize \!\!$5\, [\text{m}]$\\
\hline
\footnotesize \!Bias value for the utility forms & \footnotesize $\beta_k^{(m)}$, $\forall k,m$ & \footnotesize \!\!$5$ & \footnotesize \!Average received SNR & \footnotesize $\bar{\gamma}_k^{(m)}$, $\forall k,b$ & \footnotesize \!\!$12\, [\text{dB}]$\\
\hline
\multirow{2}{*}{\footnotesize \!Exponent for the RUT case} & \multirow{2}{*}{\footnotesize $\kappa$} & \multirow{2}{*}{\footnotesize \!\!$2$} & \multirow{2}{*}{\footnotesize \begin{tabular}{@{}l@{}}\!Packet error rate's approx. factors \\ \!for the ARQ (HARQ) protocol\end{tabular}} & \footnotesize \!\!$\gamma_{M,j}^{(k,m)}$, $\forall k,b,j$\!\! & \rule{0pt}{12pt}\footnotesize \!\!\begin{tabular}{@{}l@{}}$17.19$\\ $(10.1)\, [\text{dB}]$\end{tabular}\!\!\!\\
\cline{5-6}
&&&& \footnotesize \!\!$g_{j}^{(k,m)}$, $\forall k,b,j$\!\! & \footnotesize \!\!$0.1$ ($0.96$)\\
\hline
\end{tabular}
\medskip
\end{center}
\end{table*}

The intrinsic and extrinsic features considered in our simulation scenario are shown in Table~\ref{tab:tab2} for a fictitious \emph{air pollution} monitoring platform that collects status packets from sixteen zones ($\mathcal{Z}_{1}$ to $\mathcal{Z}_{16}$), with different levels of importance. The goal in this platform is to maximize the average SoI provided by communicated packets over the network, following the general problem $\mathcal{P}_1$ defined in \eqref{optim1}. Without loss of generality, we consider $\mathcal{Z}_{8}$, $\mathcal{Z}_{11}$, and $\mathcal{Z}_{13}$ the most crucial zones, and the average Euclidean distance of a sensor from the centers of these three zones indicates its spatial importance. 
Furthermore, we assume that rare occurrences are monitored, where the importance of an update packet increases as its carrying pollution rate approaches $100\%$. From Table~\ref{tab:tab2}, we first find $\lambda_g^{(k,m)}=0.37$, and then, the corresponding $U_n^{(k,m)}$ for each $w_b^{(k,m)}$ is calculated. The parameters used in the simulations are summarized in Table~\ref{tab:tab3}. The values of $\gamma_{M,j}^{(k,m)}$ and $g_j^{(k,m)}$ depend on the codeword length of the $j$-th arrival according to \eqref{theta1} and \eqref{theta2}. We consider a binary phase shift keying (BPSK) modulation scheme at the PHY module. Initializing the codeword length $\ell_j$ from zero to $\ell_{\rm max}$ as the worst case, the value of $\gamma_{M,j}^{(k,m)}$ varies from $5.53$ to $5.63$ with almost constant $g_j^{(k,m)}\approx0.1$ under the ARQ protocol. We consider the average value of $\gamma_{M,j}^{(k,m)}$ for all codeword lengths, as it results in negligible difference. Applying \eqref{eq:meta}--\eqref{eq:vfb}, and Tables~\ref{tab:tab2} and \ref{tab:tab3}, meta-values are computed. 
Unless otherwise specified, the EUT case, fixed-length filtering, ARQ protocol, and Algorithm~\ref{Alg:Alg.1} are used to obtain the results.

\subsection{Results and Discussion}
Figure~\ref{fig:res1} shows the objective function $\mathcal{J}_{\rm SoI}^{(k,m)}$ versus the input rate $\lambda_k$ and the number of selected realizations $l_k^{(m)}$ at the $k$-th sensor connected to the $m$-th MM. We observe that there is an optimal size of selected realizations for any input rate, which maximizes the SoI over the network. Otherwise stated, every random size of the selected realization set requires a specific input rate to maximize the SoI at the link level. However, the SoI saturates for large sizes of realization sets and high arrival rates. The convergence of the SoI to its maximum value with an optimal value of $l_k^{(m)} \leq n_k$ implies that the most valuable status packets can be efficiently transferred with a lower channel load. Considering the three defined forms of the utility function in \eqref{sec5:eq3}, the EUT and RUT offer the highest and lowest transferred SoI, respectively. Likewise, one can plot the SoI objective function versus the arrival rate and the size of the realization set for a system employing adaptive-length filtering, reaching to similar conclusions as in Figure~\ref{fig:res1}. 
\begin{figure}[t!]
    \begin{minipage}{.5\textwidth}
    \centering
    \pstool[scale=0.57]{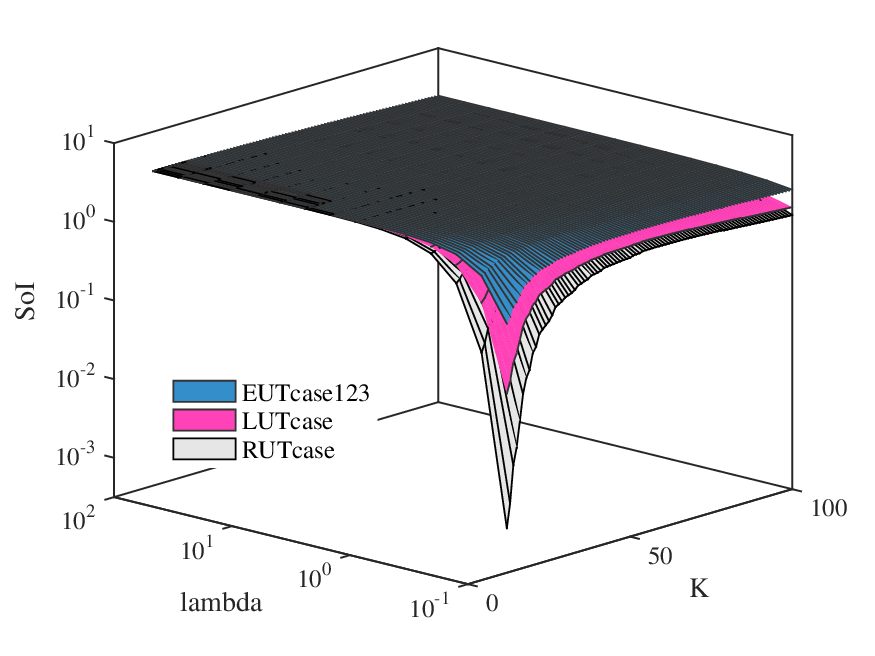}{
    \psfrag{SoI}{\hspace{-0.33cm}\footnotesize $\mathcal{J}_{\rm SoI}^{(k,m)}$}
    \psfrag{lambda}{\hspace{0.15cm}\footnotesize $\lambda_k$}
    \psfrag{K}{\hspace{-0.25cm}\footnotesize $l_k^{(m)}$}
    \psfrag{EUTcase123}{\hspace{0.03cm}\scriptsize EUT case}
    \psfrag{LUTcase}{\hspace{0.03cm}\scriptsize LUT case}
    \psfrag{RUTcase}{\hspace{0.03cm}\scriptsize RUT case}
    }
    \caption{The interplay between the SoI, the arrival rate, and the admission size for the EUT, LUT, and RUT cases.}
    \label{fig:res1}
    \end{minipage}
    \begin{minipage}{.49\textwidth}
    \centering
    \pstool[scale=0.57]{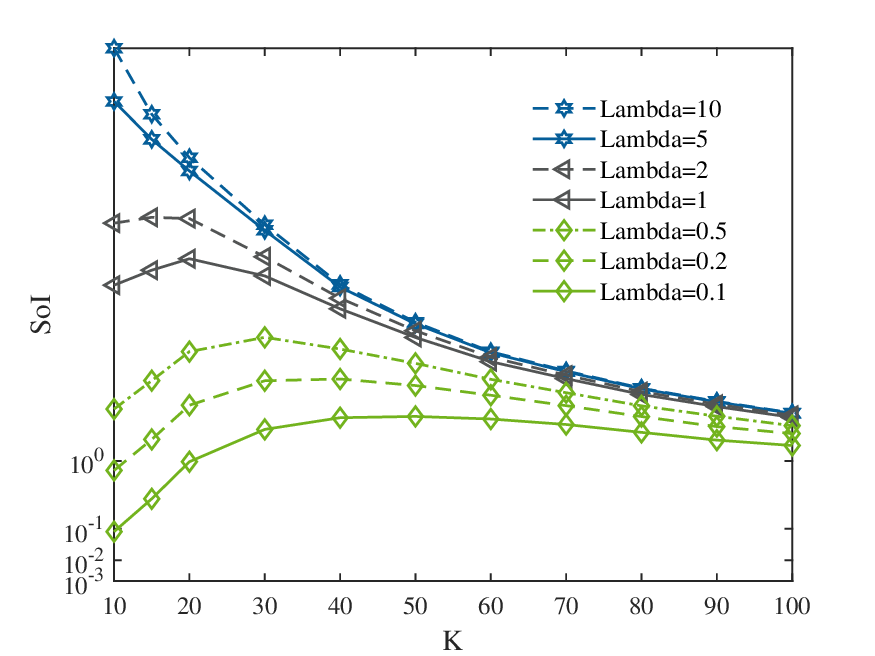}{
    \psfrag{SoI}{\hspace{-0.18cm}\footnotesize $\mathcal{J}_{\rm SoI}^{(k,m)}$}
    \psfrag{lambda}{\hspace{0.15cm}\footnotesize $\lambda_k$}
    \psfrag{K}{\hspace{-0.12cm}\footnotesize $l_k^{(m)}$}
    \psfrag{Lambda=10}{\hspace{0.03cm}\scriptsize $\lambda_k = 10$}
    \psfrag{Lambda=5}{\hspace{0.03cm}\scriptsize $\lambda_k = 5$}
    \psfrag{Lambda=2}{\hspace{0.03cm}\scriptsize $\lambda_k = 2$}
    \psfrag{Lambda=1}{\hspace{0.03cm}\scriptsize $\lambda_k = 1$}
    \psfrag{Lambda=0.5}{\hspace{0.03cm}\scriptsize $\lambda_k= 0.5$}
    \psfrag{Lambda=0.2}{\hspace{0.03cm}\scriptsize $\lambda_k = 0.2$}
    \psfrag{Lambda=0.1}{\hspace{0.03cm}\scriptsize $\lambda_k = 0.1$}
    }
    \caption{SoI versus the size of the admitted realizations for the EUT case and different arrival rates.}
    \label{fig:res2BB}
    \end{minipage}
\end{figure}
Figure~\ref{fig:res2BB}, a two-dimensional depiction of Figure~\ref{fig:res1} for the EUT case, shows the objective function $\mathcal{J}_{\rm SoI}^{(k,m)}$ versus the selection size $l_k^{(m)}$ for different values of $\lambda_k^{(m)}$. Expectedly, higher arrival rates lead to smaller sets of selected realizations, as fewer packets are selected to avoid a high blockage rate, whereas, for low arrival rates, a large number of packets can pass the filter to harness high lateness.

\begin{figure}
\centering
    \pstool[scale=0.57]{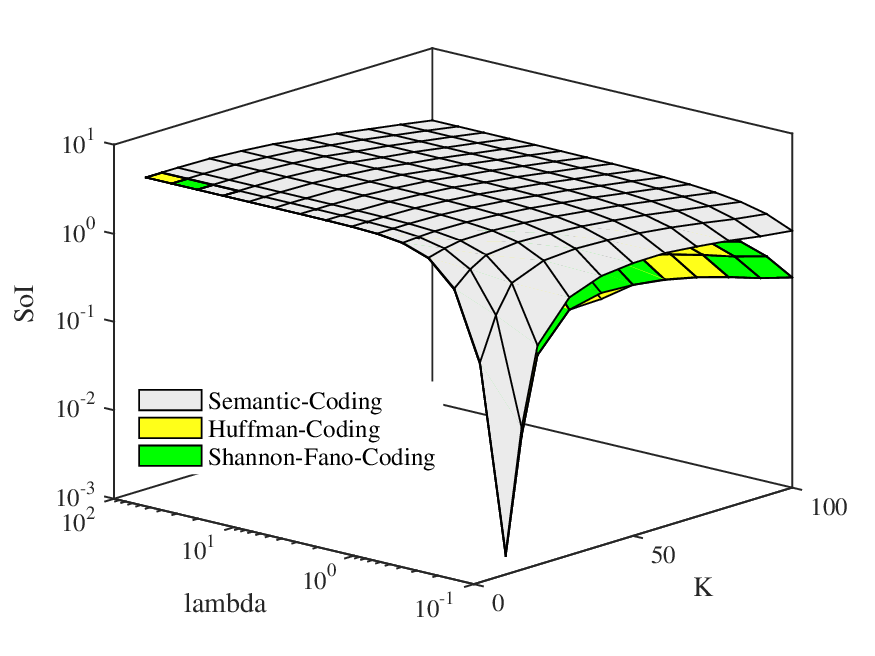}{
    \psfrag{SoI}{\hspace{-0.37cm}\footnotesize $\mathcal{J}_{\rm SoI}^{(k,m)}$}
    \psfrag{lambda}{\hspace{0.15cm}\footnotesize $\lambda_k$}
    \psfrag{K}{\hspace{-0.25cm}\footnotesize $l_k^{(m)}$}
    \psfrag{Semantic-Coding}{\hspace{0.03cm}\scriptsize Semantic coding}
    \psfrag{Huffman-Coding}{\hspace{0.03cm}\scriptsize Huffman coding}
    \psfrag{Shannon-Fano-Coding}{\hspace{0.03cm}\scriptsize Shannon-Fano coding}
    }
    \caption{The comparison between the SoI performances of the proposed semantic, Shannon-Fano, and Huffman source coding schemes for the RUT case.}
    \label{fig:res1R}
\end{figure}
Besides, the performance of the proposed semantic coding is compared with those of \emph{Shannon-Fano} and \emph{Huffman} coding in Figure~\ref{fig:res1R} for the RUT case. It is shown that the optimal real-valued codeword lengths derived herein result in higher SoI compared to the other coding schemes. Moreover, as the size of the realization or selection set increases, the performance gap increases. The reason is that the value of observations and the timeliness of the system are not taken into account in either Shannon-Fano or Huffman coding.

Initializing the arrival rate from $10^{-1}$ to $10^2$ and finding optimal values for $l_k^{(m)}$, we assess the performance of the proposed semantics-aware filtering techniques. For that, Figure~\ref{fig:res2} depicts the sum SoI objective function $\mathcal{J}_{{\rm SoI}} \coloneqq \sum_{m=1}^{M} \sum_{k \in \mathcal{K}(m)} \mathcal{J}_{{\rm SoI}}^{(k,m)}$ versus input rate $\lambda_1=...=\lambda_K=\lambda_k$ for the fixed-length, adaptive-length, and an asymptotic variant of adaptive-length with $l_k^{(k)}=n_k$, $\forall k$. We see that the asymptotic adaptive-length method outperforms the fixed-length one for low arrival rates. However, the performance of the asymptotic adaptive-length filtering decreases with increasing the arrival rate. The performance of the fixed-length filtering gradually improves by increasing the input rate since the size of admitted packets shrinks to lower the blockage rate. Thanks to its flexibility, adaptive-length filtering outperforms its asymptotic variant and fixed-length filtering for any arrival rate. 
\begin{figure}[t!]
    \centering
    \pstool[scale=0.57]{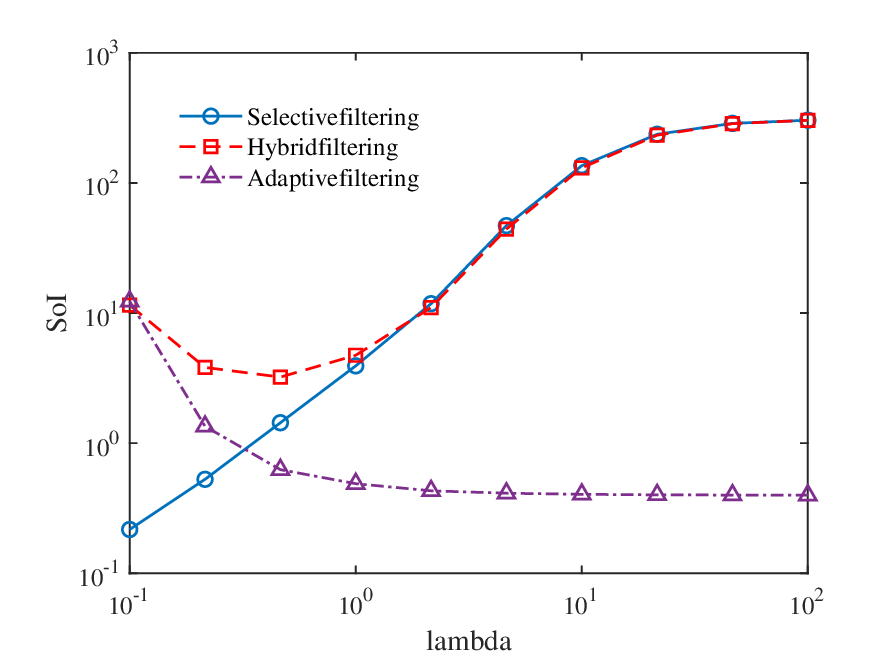}{
    \psfrag{SoI}{\hspace{-0.17cm}\footnotesize $\mathcal{J}_{\rm SoI}$}
    \psfrag{lambda}{\hspace{0.15cm}\footnotesize $\lambda_k$}
    \psfrag{K}{\hspace{-0.25cm}\footnotesize $l_k^{(m)}$}
    \psfrag{Selectivefiltering}{\hspace{0.03cm}\scriptsize Fixed-length}
    \psfrag{Adaptivefiltering}{\hspace{0.03cm}\scriptsize Adaptive-length (Asym.)}
    \psfrag{Hybridfiltering}{\hspace{0.03cm}\scriptsize Adaptive-length}
    }
    \caption{SoI performance of semantics-aware filtering versus the arrival rate.}
    \label{fig:res2}
\end{figure}

The normalized SoI of the transmitted and monitored update packets are plotted in Figure~\ref{fig:res3}, simulated over an interval of $[0, 1000]\,[\text{sec.}]$ for the rates of $\lambda_k=0.5$ and $5$, which validates the analytical results used for plotting Figure~\ref{fig:res2}.  
\begin{figure}[t!]
    \begin{subfigure}[]{0.49\textwidth}
    \centering
    \pstool[scale=0.57]{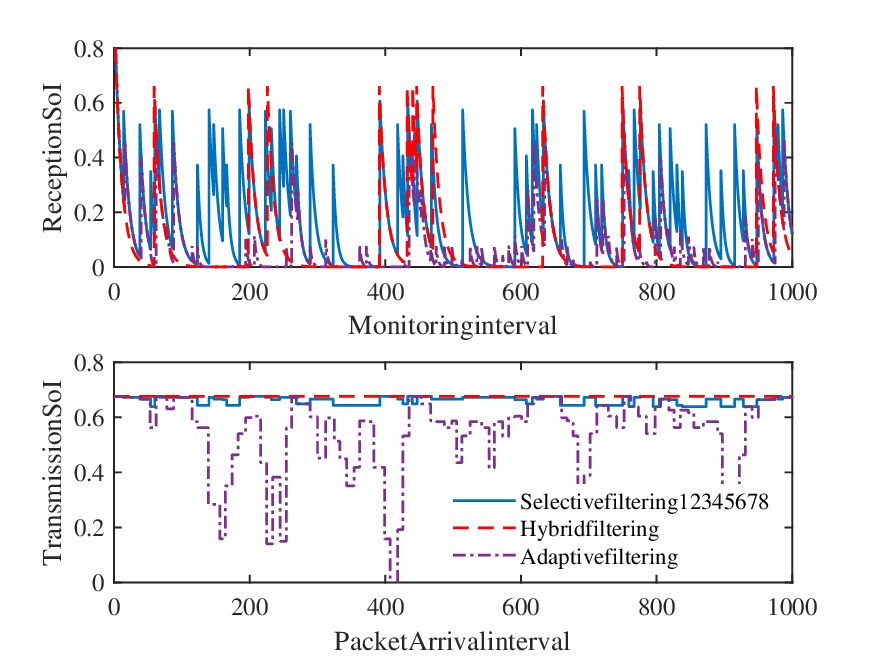}{
    \psfrag{Monitoringinterval}{\hspace{-0.35cm}\footnotesize $\text{Monitoring interval}\,[\text{sec}]$}
    \psfrag{PacketArrivalinterval}{\hspace{-0.03cm}\footnotesize $\text{Arrival interval}\,[\text{sec}]$}
    \psfrag{ReceptionSoI}{\hspace{-0.7cm}\footnotesize Monitored norm. SoI}
    \psfrag{TransmissionSoI}{\hspace{-0.71cm}\footnotesize Transmitted norm. SoI}
    \psfrag{Selectivefiltering12345678}{\hspace{0.03cm}\scriptsize Fixed-length}
    \psfrag{Adaptivefiltering}{\hspace{0.03cm}\scriptsize Adaptive-length (Asym.)}
    \psfrag{Hybridfiltering}{\hspace{0.03cm}\scriptsize Adaptive-length}
    }
    \caption{}
    \end{subfigure}
    \hfill
    \begin{subfigure}[]{0.5\textwidth}
    \centering
    \pstool[scale=0.57]{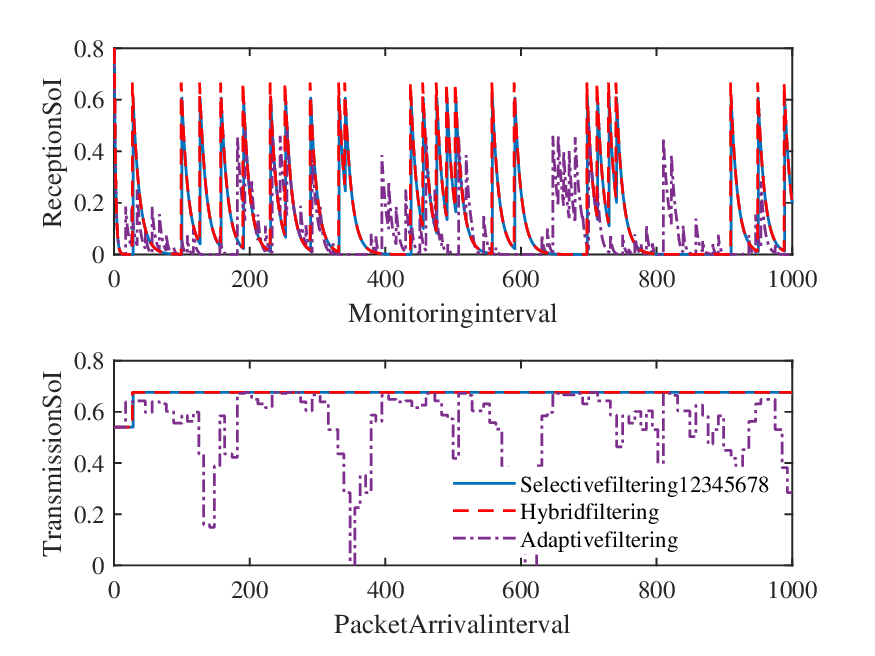}{
    \psfrag{Monitoringinterval}{\hspace{-0.35cm}\footnotesize $\text{Monitoring interval}\,[\text{sec}]$}
    \psfrag{PacketArrivalinterval}{\hspace{-0.03cm}\footnotesize $\text{Arrival interval}\,[\text{sec}]$}
    \psfrag{ReceptionSoI}{\hspace{-0.68cm}\footnotesize Monitored norm. SoI}
    \psfrag{TransmissionSoI}{\hspace{-0.71cm}\footnotesize Transmitted norm. SoI}
    \psfrag{Selectivefiltering12345678}{\hspace{0.03cm}\scriptsize Fixed-length}
    \psfrag{Adaptivefiltering}{\hspace{0.03cm}\scriptsize Adaptive-length (Asym.)}
    \psfrag{Hybridfiltering}{\hspace{0.03cm}\scriptsize Adaptive-length}
    }
    \caption{}
    \end{subfigure}
    \caption{Comparison of the proposed semantics-aware filtering from the transmitted and monitored SoI perspectives for (a) $\lambda_k = 0.5$ and (b) $\lambda_k=5$.}
    \label{fig:res3}
\end{figure}

Using the simulations in Figure~\ref{fig:res3}, Figure~\ref{fig:res1table} depicts the average percentages (rates) of filtered, blocked, and transmitted packets to the $m$-th MM, out of all packets arrived at the $k$-th SSM within the interval of $[0, 1000]\,[\text{sec.}]$, averaged over $500$ iterations. The adaptive-length method has higher filtering rates, i.e., $70.47\%$ ($51.13\%$), and provides lower blockage rates, i.e., $19.93\%$ ($46.45\%$), compared to those of the fixed-length filtering, i.e., the filtering rates of $65.68\%$ ($45.66\%$) and the blockage rates of $24.78\%$ ($52.17\%$), for $\lambda_k=0.5$ ($5$), thanks to its flexibility and semantics-aware decisions. The advantages of adaptive-length filtering come at the cost of requiring slightly higher channel capacity, i.e., $0.06\%$ and $0.25\%$ for $\lambda_k=0.5$ and $5$, respectively, compared to a fixed-length filter. It is worth mentioning that the drops of the transmitted SoI for the asymptotic adaptive-length filtering in Figure~\ref{fig:res3} occur due to elevated blockage rate and number of consecutive blockages. 
\begin{figure}[t!]
\begin{minipage}{.5\textwidth}
    \centering
    \pstool[scale=0.57]{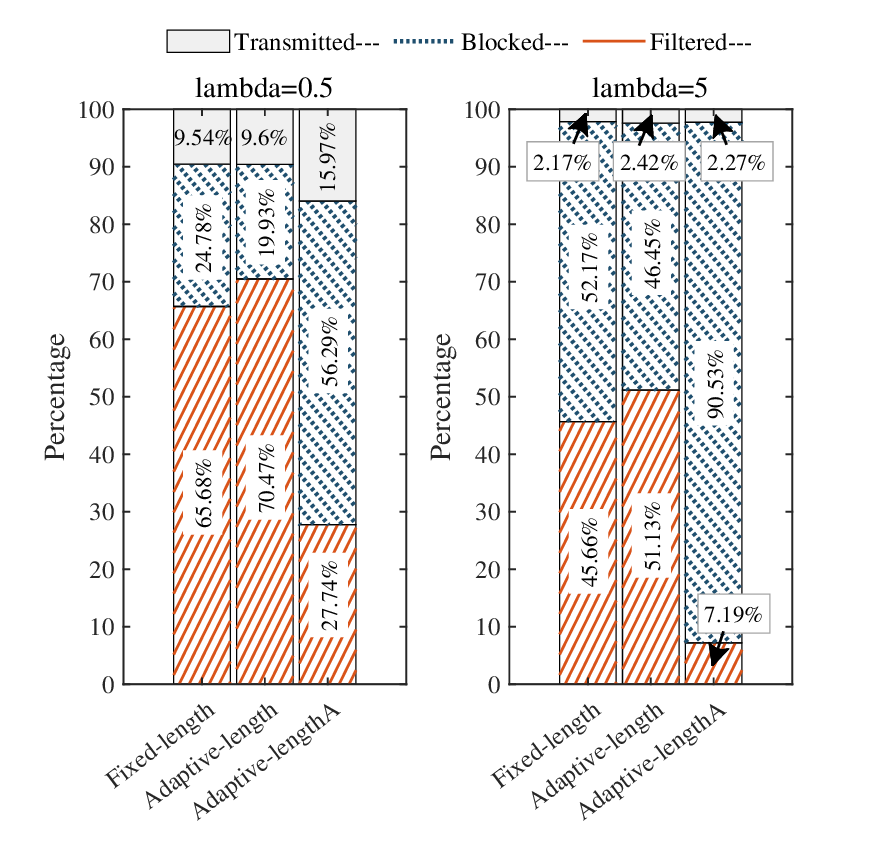}{
    \psfrag{Percentage}{\hspace{-2.07cm}\footnotesize Average percentages of processed packets}
    \psfrag{Transmitted---}{\hspace{0.03cm}\scriptsize Transmitted}
    \psfrag{Blocked---}{\hspace{0.03cm}\scriptsize Blocked}
    \psfrag{Filtered---}{\hspace{0.03cm}\scriptsize Filtered}
    \psfrag{Fixed-length}{\hspace{-0.1cm}\scriptsize Fixed-length}
    \psfrag{Adaptive-length}{\hspace{-0.1cm}\scriptsize Adaptive-length}
    \psfrag{Adaptive-lengthA}{\hspace{-0.92cm}\scriptsize Adaptive-length (Asym.)}
    \psfrag{lambda=0.5}{\hspace{0.09cm}\scriptsize $\lambda_k = 0.5$}
    \psfrag{lambda=5}{\hspace{0.09cm}\scriptsize $\lambda_k = 5$}
    }
    \vspace{0.01cm}
    \caption{Average percentage of filtered, blocked, and transmitted packets arrived within the interval of $[0, 1000]\,[\text{sec.}]$ under different filtering techniques.}
    \label{fig:res1table}
\end{minipage}
\begin{minipage}{.49\textwidth}
    \centering
    \pstool[scale=0.57]{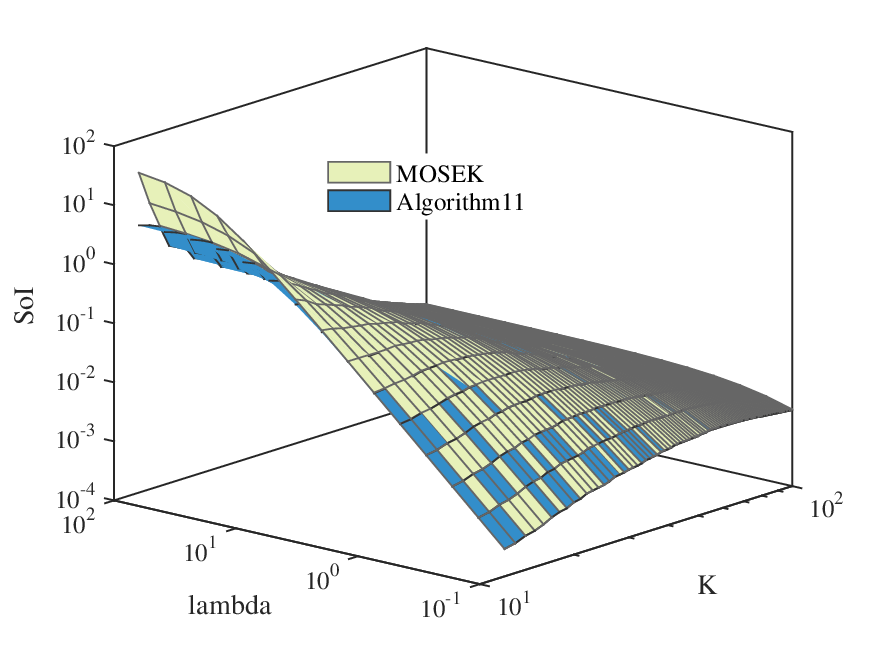}{
    \psfrag{SoI}{\hspace{-0.33cm}\footnotesize $\mathcal{J}_{\rm SoI}^{(k,m)}$}
    \psfrag{lambda}{\hspace{0.15cm}\footnotesize $\lambda_k$}
    \psfrag{K}{\hspace{-0.25cm}\footnotesize $l_k^{(m)}$}
    \psfrag{MOSEK}{\hspace{0.03cm}\scriptsize MOSEK solver}
    \psfrag{Algorithm11}{\hspace{0.03cm}\scriptsize Algorithm~\ref{Alg:Alg.1}}
    \psfrag{RUTcase}{\hspace{0.03cm}\scriptsize RUT case}
    }
    \caption{SoI versus the arrival rate and the admission size for Algorithm~\ref{Alg:Alg.1} and the MOSEK solver.}
    \label{fig:res4}
\end{minipage}
\end{figure}

Figure~\ref{fig:res4} depicts the SoI versus $\lambda_k$ and $l_k^{(m)}$ for two scenarios, in order to evaluate the accuracy of Algorithm~\ref{Alg:Alg.1}. In the first scenario, the MOSEK solver is used to find the optimal codeword lengths; in the second scenario, Algorithm~\ref{Alg:Alg.1} is applied to determine the optimal codeword lengths from \eqref{sec5:eq9}, \eqref{sec5:eq14}, or \eqref{sec5:eq17}. We observe that the proposed algorithm and the solver perform identically for low to moderate arrival rates, whereas for higher rates, the MOSEK solver slightly outperforms Algorithm~\ref{Alg:Alg.1}, resulting in a difference in SoI as low as $0.32\%$ on average. Nevertheless, the algorithm has the advantage of tractability, providing design insights and information about the effect of different system parameters on the codeword lengths.

In Figure~\ref{fig:res5}\,\textcolor{red}{(a)} and Figure~\ref{fig:res5}\,\textcolor{red}{(b)}, we compare the SoI when ARQ and HARQ protocols are used respectively under two different truncation models. The first model has a fixed (predefined) maximum number of transmission rounds, i.e., $r_{{\rm max}, j}^{(k,m)} = r_{\rm max}$, while in the second model, the importance of the arrival affects the maximum possible number of its transmissions according to \eqref{eq:rmaxeq}. It is shown that both models provide almost the same SoI for small admission sizes. However, the semantics-aware truncated protocol exhibits higher performance for moderate to large admission sizes. Indeed, when the arrival traffic increases, reserving the channel for more important packets results in higher gain compared to giving an equal chance to all blocked packets to be retransmitted for a fixed number of rounds, regardless of their usefulness. 
\begin{figure}[t!]
    \begin{subfigure}[]{0.49\textwidth}
    \centering
    \pstool[scale=0.57]{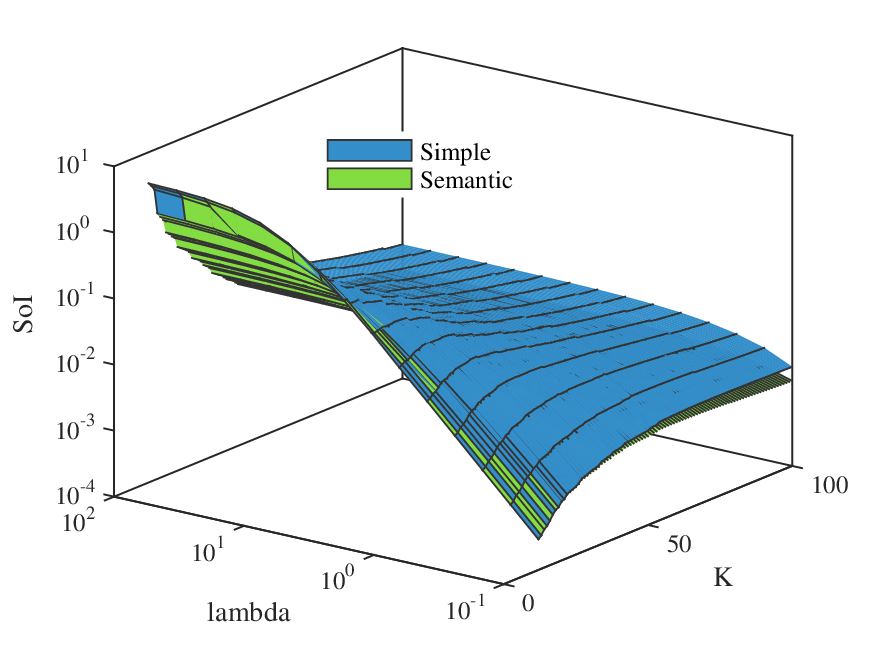}{
    \psfrag{SoI}{\hspace{-0.33cm}\footnotesize $\mathcal{J}_{\rm SoI}^{(k,m)}$}
    \psfrag{lambda}{\hspace{0.15cm}\footnotesize $\lambda_k$}
    \psfrag{K}{\hspace{-0.25cm}\footnotesize $l_k^{(m)}$}
    \psfrag{Simple}{\hspace{0.03cm}\scriptsize Semantics-aware}
    \psfrag{Semantic}{\hspace{0.03cm}\scriptsize Fixed $r_{\rm max}$}
    }
    \caption{}
    \end{subfigure}
    \hfil
    \begin{subfigure}[]{0.5\textwidth}
    \centering
    \pstool[scale=0.57]{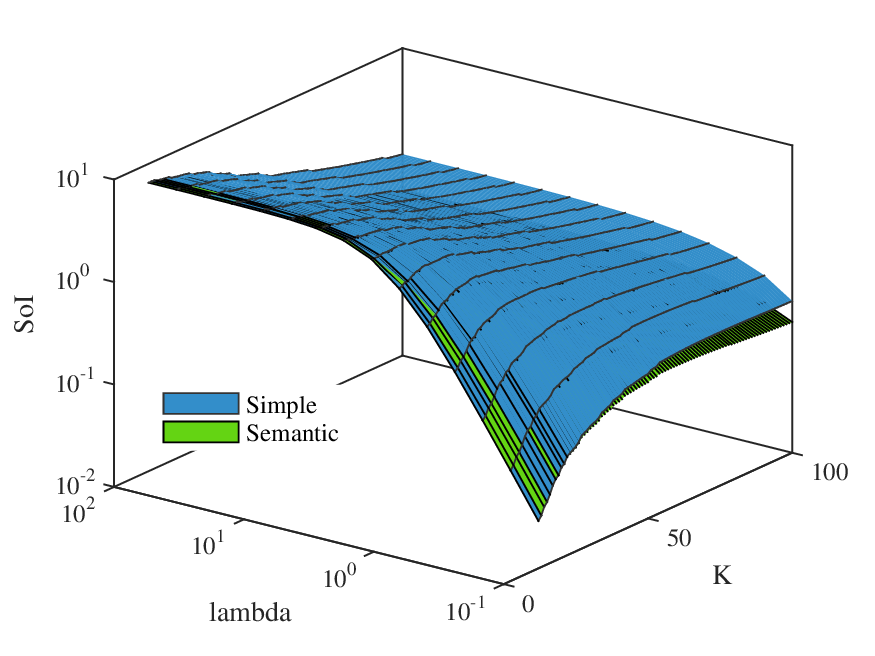}{
    \psfrag{SoI}{\hspace{-0.33cm}\footnotesize $\mathcal{J}_{\rm SoI}^{(k,m)}$}
    \psfrag{lambda}{\hspace{0.15cm}\footnotesize $\lambda_k$}
    \psfrag{K}{\hspace{-0.25cm}\footnotesize $l_k^{(m)}$}
    \psfrag{Simple}{\hspace{0.03cm}\scriptsize Semantics-aware}
    \psfrag{Semantic}{\hspace{0.03cm}\scriptsize Fixed $r_{\rm max}$}
    }
    \caption{}
    \end{subfigure}
    \caption{SoI performance of (a) ARQ and (b) HARQ error control protocols truncated with fixed and semantics-aware transmission limits.}
    \label{fig:res5}
\end{figure}

\section{Conclusion}\label{Section7}
We studied the problem of timely source coding in a distributed monitoring network with heterogeneous goals, where observations at each information source are filtered and transmitted to monitoring entities, depending on their importance (semantics) for achieving each monitor's goal. 
We analytically derived the real-valued optimal codeword lengths that maximize a weighted sum of semantics-aware utility functions. Specifically, our analytical and numerical results show an optimal number of realizations to send for each link, which depends, among others, on the source statistics, the packet arrival rate, the semantic value of each arrival, and the weight assigned to each monitor’s goal. We proposed two different semantic filtering methods and showed that adaptive-length filtering may provide higher SoI with a lower channel blockage rate at the cost of a minor increase in the transmission load. We also extended the conventional truncated error control protocols to a semantics-aware one that specifies the maximum number of retransmissions based on the importance of a packet.
The main takeaway is that properly optimized semantic filtering and source coding can significantly reduce the number of status updates required for the timely delivery of semantic information to monitors with heterogeneous goals. 

\appendices
\section{Proof of Lemma~\ref{lemm1}}\label{lemm:app1}
At a layer of the $k$-th SSM serving the $m$-th MM, the timeliness utility of the $(j\!+\!d)$-th packet crosses the $j$-th one if 
\begin{align} \label{eq1:lemma1:t}
    \rho_j^{(k,m)} \left(t_c^{j,d}\!-\!t_{j}^{(k,m)}\right) = \rho_{j+d}^{(k,m)}  \left(t_c^{j,d}\!-\!t_{j+d}^{(k,m)}\right)\!,
\end{align}
as in Figure~\ref{fig:fig3}, for every type of the defined forms in \eqref{sec5:eq3}, where $t_c^{j,d}$ denotes the cross time of order $d$ for the $j$-th arrival. From \eqref{eq1:lemma1:t}, we can write
\begin{equation}
    \rho_j^{(k,m)} \left(t_c^{j,d}\!-\!t_{j}^{(k,m)}\right) = \rho_{j+d}^{(k,m)} \!\left(t_c^{j,d}\!-\!t_{j}^{(k,m)}\!-\!\sum_{k=j+1}^{j+d}Y_k^{(k,m)}\!\right)\!
\end{equation}
which results in
\begin{align} \label{eq3:lemma1:t}
   t_c^{j,d} = t_j^{(k,m)} + \dfrac{\rho_{j+d}^{(k,m)}}{\rho_{j+d}^{(k,m)} \!-\! \rho_{j}^{(k,m)}} \sum_{k=j+1}^{j+d}Y_k^{(k,m)}.
\end{align}
Since $f_k(\cdot)$ is convex and non-increasing, if the $(j\!+\!d)$-th arrivals' utility plot crosses that of the $j$-th arrival, the $j$-th packet provides higher or equal SoI compared to the $(j\!+\!d)$-th packet, beyond that cross point. Therefore, the semantic filter discards the $(j\!+\!d)$-th status packet in the presence of the $j$-th one if $t_c^{j,d}$ is before the end of $S_{j+d}^{(k,m)}$. From \eqref{eq3:lemma1:t}, we reach
\begin{align} \label{eq4:lemma1:t} 
t_j^{(k,m)} + \dfrac{\rho_{j+d}^{(k,m)}}{\rho_{j+d}^{(k,m)} \!-\! \rho_{j}^{(k,m)}} \sum_{k=j+1}^{j+d}Y_k^{(k,m)} < t_j^{(k,m)} + \sum_{k=j+1}^{j+d}Y_k^{(k,m)} + S_{j+d}^{(k,m)},
\end{align}
then we obtain
\begin{align} \label{eq5:lemma1:t} 
    \dfrac{\rho_{j}^{(k,m)}}{\rho_{j+d}^{(k,m)} \!-\! \rho_{j}^{(k,m)}} < \dfrac{S_{j+d}^{(k,m)}}{\sum_{k=j+1}^{j+d}(S_k^{(k,m)}\!+\!W_{0,k}^{(k,m)})} \leq \dfrac{\ell_{\rm max}}{d \ell_{\rm max}+\sum_{k=j+1}^{j+d}W_{0,k}^{(k,m)}} \stackrel{(\ast)}{=} \dfrac{\ell_{\rm max}}{d \ell_{\rm max}+\widehat{W}_d^{(k,m)}}
\end{align}
where $\ell_{\rm max}$ denotes the upper bound for the size of a codeword length, and $W_{0,k}^{(k,m)}$ is the waiting time of the $k$-th arrival prior to filtering, which follows an exponential distribution with rate $\lambda_kq_{l_k^{(m)}}$. For $(\ast)$, we define $\widehat{W}_d^{(k,m)} \coloneqq \sum_{k=j+1}^{j+d}W_{0,k}^{(k,m)}$. Since $d$ is integer, $\widehat{W}_d^{(k,m)}$ becomes an Erlang distributed r.v. of order $d$ and rate $\lambda_kq_{l_k^{(m)}}$. With regard to the definition of $\psi_k^{(m)}$ in \eqref{sec4:eq1}, and after some manipulations, we obtain \eqref{sec4:eq2}.

\section{Proof of Lemma~\ref{lemm2}}\label{lemm:app2}
With regard to the form of $f_k(\Delta_k^{(m)}(t))$ for the EUT case in \eqref{sec5:eq3} and Figure~\ref{fig:fig4}, we compute polygons $Q_{j}^{(k,m)}$, $\forall j\geq2$, as 
\begin{align}\label{sec5:eq6}
Q_{j}^{(k,m)} &= \int_{t_{j-1}^{(k,m)}}^{t_{j-1}^{(k,m)}+S_{j}^{(k,m)}} \!\!\left(e^{-\rho_{j-1}^{(k,m)} (t-t_{j-2}^{(k,m)})}  \!+\! \beta_k^{(m)}\right)\!{\rm d}t +\int_{t_{j-1}^{(k,m)} + S_j^{(k,m)}}^{t_{j}^{(k,m)}} \!\!\left(e^{-\rho_j^{(k,m)}(t-t_{j-1}^{(k,m)})} \!+\! \beta_k^{(m)}\right)\!{\rm d}t \nonumber \\
&\approx - \rho_{j-1}^{(k,m)}Y_{j-1}^{(k,m)}S_{j}^{(k,m)} + \dfrac{1}{2} (\rho_{j}^{(k,m)} \!-\! \rho_{j-1}^{(k,m)})(S_{j}^{(k,m)})^2 - \dfrac{1}{2}\rho^{(k,m)}_{j}(Y_{j}^{(k,m)})^2 + (1\!+\!\beta_k^{(m)})Y_{j}^{(k,m)}
\end{align}
where the approximation comes from second-order Taylor expansion for the exponential function. The expected value is then given by
\begin{align}\label{sec5:eq7}
    \mathbb{E}[Q^{(k,m)}] &\approx 
   - \mathbb{E}\!\left[\rho^{(k,m)}Y^{(k,m)}\right]\mathbb{E}[S^{(k,m)}] + \dfrac{1}{2}
   \mathbb{E}\!\left[\rho^{(k,m)}(S^{(k,m)})^2\right]  \nonumber \\
   &~~~- \dfrac{1}{2}
   \mathbb{E}[\rho^{(k,m)}]\mathbb{E}\!\left[(S^{(k,m)})^2\right]  - \dfrac{1}{2}\mathbb{E}\!\left[\rho^{(k,m)}(Y^{(k,m)})^2\right] + (1\!+\!\beta_k^{(m)})\mathbb{E}[Y^{(k,m)}].
\end{align}
Note that large enough $\beta_k^{(m)}$ guarantees the positiveness of $\mathbb{E}[Q^{(k,m)}]$ in \eqref{sec5:eq7} after applying the expansion. By importing $Y_j^{(k,m)} = W_j^{(k,m)} + S_j^{(k,m)}$ and $\mathbb{E}[(S^{(k,m)})^m]$ from \eqref{sec2:eq2} into \eqref{sec5:eq7}, and since $\rho_j^{(k,m)}$ and $W_j^{(k,m)}$ are independent, we obtain \eqref{sec5:eq8}.

\section{Proof of Proposition~\ref{prop1}}\label{app1}
Using the parameter $\chi_k^{(m)}$ given in \eqref{sec5:eq11}, we rewrite \eqref{sec5:eq8} as follows.
\begin{align}\label{app1:eq1} 
    \mathbb{E}[Q^{(k,m)}] &\approx 
    - \chi_k^{(m)} \bar{\rho}_k^{(m)} \!\left(\mathbb{E}\!\left[\varphi^{(k,m)}L^{(k,m)}\right]\right)^{\!2} - \dfrac{1}{2}
   \bar{\rho}_k^{(m)}\mathbb{E}\!\left[\varphi^{(k,m)}(L^{(k,m)})^2\right] - \chi_k^{(m)} \bar{\rho}_k^{(m)} \gamma_k^{(m)} \mathbb{E}\!\left[\varphi^{(k,m)}L^{(k,m)}\right] \nonumber \\
    &~~~+ \Big( 1\!+\!\beta_k^{(m)} \!-\! \bar{\rho}_k^{(m)}\gamma_k^{(m)}\Big)\mathbb{E}\!\left[\varphi^{(k,m)}L^{(k,m)}\right] - \bar{\rho}_k^{(m)}(\gamma_k^{(m)})^2 + (1\!+\!\beta_k^{(m)})\gamma_k^{(m)}.
\end{align}

By \eqref{app1:eq1}, we define the Lagrange function $\mathcal{L}(\ell_i^{(k,m)};\mu_k^{(m)})$ for $\mathcal{P}_2$.
Next, we can write Karush-Kuhn-Tucker (KKT) conditions for $ i\in\mathcal{I}_{l_k^{(m)}}$, as what follows.
\begin{align}\label{app1:eq3}
    &\frac{\partial \mathcal{L}(\ell_i^{(k,m)};\mu_k^{(m)})}{\partial \ell_i^{(k,m)}} = - 2\chi_k^{(m)} \bar{\rho}_k^{(m)}p_i^{(k,m)} \varphi_i^{(k,m)} \!\!\!\sum_{i \in \mathcal{I}_{l_k^{(m)}}} \!\! p_i^{(k,m)} \varphi_i^{(k,m)} \ell_i^{(k,m)} 
    - \bar{\rho}_k^{(m)} p_i^{(k,m)} \varphi_i^{(k,m)} \ell_i^{(k,m)} \nonumber \\
    &- \chi_k^{(m)} \bar{\rho}_k^{(m)} \gamma_k^{(m)} \! p_i^{(k,m)} \varphi_i^{(k,m)} + \Big( 1\!+\!\beta_k^{(m)} \!-\! \bar{\rho}_k^{(m)}\gamma_k^{(m)}\Big) p_i^{(k,m)} \varphi_i^{(k,m)} + \mu_k^{(m)} \ln(2)2^{-\ell_i^{(k,m)}}  = 0
\end{align}
after combining $w_m p_k^{(m)}\eta_k^{(m)}$ with $\mu_k^{(m)}$, where $\mu_k^{(m)}\geq0$ is the Lagrange multiplier. The complementary slackness condition is 
\begin{align}\label{app1:eq4}
    \mu_k^{(m)} \bigg(\!\sum_{i\in \mathcal{I}_{l_k^{(m)}}} 2^{-\ell_i^{(k,m)}}\!-\!1\bigg)\!=0.
\end{align}
There exist two conditions, one of which meets \eqref{app1:eq4}, as below.
\begin{itemize}
    \item[(i)] $\mu_k^{(m)}=0$, hence $\sum_{i\in \mathcal{I}_{l_k^{(m)}}} 2^{-\ell_i^{(k,m)}} < 1$; or
    \item[(ii)] $\mu_k^{(m)} \neq 0$, hence $\sum_{i\in \mathcal{I}_{l_k^{(m)}}} 2^{-\ell_i^{(k,m)}}=1$.
\end{itemize}
Under condition (i), the left side of \eqref{app1:eq4} may potentially lead to non-positive codeword lengths, hence $\mathbb{E}\big[\varphi^{(k,m)}L^{(k,m)}\big]\leq0$, for a given $\beta_k^{(m)}$. Since negative codeword lengths are not meaningful, by contradiction, condition (ii) must satisfy \eqref{app1:eq4} where parameter $\mu_k^{(m)}$ offers a degree of flexibility. 
After some algebraic manipulations, we reach
\begin{align}\label{app1:eq5}
    \dfrac{\mu_k^{(m)} (\ln(2))^2}{ \bar{\rho}_k^{(m)} p_i^{(k,m)} \varphi_i^{(k,m)}}2^{- \ell_i^{(k,m)}} \operatorname{exp}\!\left(\!\dfrac{\mu_k^{(m)} (\ln(2))^2}{\bar{\rho}_k^{(m)} p_i^{(k,m)} \varphi_i^{(k,m)}} 2^{-\ell_i^{(k,m)}}\!\right)  = \dfrac{\mu_k^{(m)} (\ln(2))^2}{\bar{\rho}_k^{(m)} p_i^{(k,m)} \varphi_i^{(k,m)}} 2^{{\xi_k^{(m)}}}
\end{align}
where $\xi_k^{(m)}$ is given in \eqref{sec5:eq10}.
Herein, the form of \eqref{app1:eq5} is equal to $x \operatorname{exp}(x) = y$ for which the solution is $x = W_m(y)$, where $m=0$ for $y\geq0$. 

\section{Proof of Lemma~\ref{lemm3}}\label{lemm:app3}
Considering the form of $f_k(\Delta_k^{(m)}(t))$ for the LUT case in \eqref{sec5:eq3}, polygons $Q_{j}^{(k,m)}$, $\forall j\geq2$, and using Taylor expansion, yields
\begin{align}\label{sec5:eq12}
&Q_{j}^{(k,m)} = \int_{t_{j-1}^{(k,m)}}^{t_{j-1}^{(k,m)}+S_{j}^{(k,m)}} \!\!\left(\ln\!\big(\!-\!\rho_{j-1}^{(k,m)}(t-t_{j-2}^{(k,m)})\big) \!+\! \beta_k^{(m)}\right)\!{\rm d}t \nonumber \\ &~~~~~~~~ +\int_{t_{j-1}^{(k,m)} + S_j^{(k,m)}}^{t_{j}^{(k,m)}} \!\!\left(\ln\!\big(\!-\!\rho_{j}^{(k,m)}(t-t_{j-1}^{(k,m)})\big) \!+\! \beta_k^{(m)}\right)\!{\rm d}t \nonumber\\
&\approx - 2\rho_{j-1}^{(k,m)}Y_{j-1}^{(k,m)}S_{j}^{(k,m)} + (\rho_{j}^{(k,m)} \!-\! \rho_{j-1}^{(k,m)})(S_{j}^{(k,m)})^2  - \rho^{(k,m)}_{j}(Y_{j}^{(k,m)})^2 + (\beta_k^{(m)} \!-\!1)Y_{j}^{(k,m)},
\end{align}
whose expectation is given by
\begin{align}\label{sec5:eq12b}
    \mathbb{E}[Q^{(k,m)}] &\approx 
   - 2\mathbb{E}\!\left[\rho^{(k,m)}Y^{(k,m)}\right]\mathbb{E}[S^{(k,m)}] + 
   \mathbb{E}\!\left[\rho^{(k,m)}(S^{(k,m)})^2\right] \nonumber \\
   &~~~- 
   \mathbb{E}[\rho^{(k,m)}]\mathbb{E}\!\left[(S^{(k,m)})^2\right]  - \mathbb{E}\!\left[\rho^{(k,m)}(Y^{(k,m)})^2\right]+ (\beta_k^{(m)} \!-\! 1)\mathbb{E}[Y^{(k,m)}].
\end{align}
Using \eqref{sec5:eq12b}, and after some mathematical manipulations, we obtain \eqref{sec5:eq13}.

\bibliographystyle{IEEEtran}
\bibliography{References.bib}

\balance

\end{document}